\documentclass[onefignum,onetabnum
]{siamart171218}

\usepackage{lipsum}
\usepackage{amsfonts}
\usepackage{graphicx}
\usepackage{epstopdf}
\usepackage{amssymb}
\usepackage{algorithmic}
\ifpdf
  \DeclareGraphicsExtensions{.eps,.pdf,.png,.jpg}
\else
  \DeclareGraphicsExtensions{.eps}
\fi

\newsiamremark{remark}{Remark}
\newsiamremark{hypothesis}{Hypothesis}
\crefname{hypothesis}{Hypothesis}{Hypotheses}
\newsiamthm{claim}{Claim}

\headers{Discrete Breathers in a Honeycomb Lattice Near a Semi-Dirac Point}{A. Hofstrand}

\title{Discrete Breathers in a Honeycomb Lattice Near a Semi-Dirac Point}

\author{Andrew Hofstrand\thanks{Department of Mathematics, New York Institute of Technology, New York, NY (\email{ahofstra@nyit.edu}).}
}

\usepackage{amsopn}

\makeatletter
\newcommand*{\addFileDependency}[1]{
  \typeout{(#1)}
  \@addtofilelist{#1}
  \IfFileExists{#1}{}{\typeout{No file #1.}}
}
\makeatother

\newcommand*{\myexternaldocument}[1]{%
    \externaldocument{#1}%
    \addFileDependency{#1.tex}%
    \addFileDependency{#1.aux}%
}

\myexternaldocument{ex_supplement}
\usepackage{comment}

\begin{document}

\maketitle

\begin{abstract} 
We study the dynamics of discrete breathers---spatially localized and time-periodic solutions---inside the bandgap of a nonlinear honeycomb lattice where the dispersion landscape approaches a so-called semi-Dirac point in which the bands cross linearly in one direction and quadratically in the orthogonal direction.  By studying breather dynamics in two opposing asymptotic regimes, near the continuum and anti-continuum limits, we capture the spatial profiles of hybrid coherent structures having central cores supported on a finite number of lattice sites and infinite decaying tails that are well approximated by exact separable solutions to an effective long-wave PDE theory at spatial infinity.  We find that these breathers are dynamically stable over a wide range of parameters and find an instability transition. Finally, we analyze the Floquet stability of spatially extended nonlinear plane waves bifurcating from the zero solution at the edges of the gap and how they shape breather profiles inside the gap.          
\end{abstract}

\begin{keywords}
discrete lattice dynamical systems; gap solitons
\end{keywords}

\begin{AMS}
  34A34, 34A33, 34C25, 34E13, 35C08, 37J25
\end{AMS}

\section{Introduction}
\subsection{Motivation, background, and outline}  With the advent of realizable 2D materials such as graphene~\cite{novo}---a single layer of carbon atoms arranged in an approximate honeycomb array---there has been an outpouring of theoretical and experimental studies on wave transport and localization properties in two-dimensional lattices. The bipartite honeycomb geometry frequently appears in applications in condensed matter systems~\cite{kane}, ultra-cold atoms in optical lattices~\cite{ess,haller}, coupled waveguide arrays~\cite{ma}, acoustic structures~\cite{hibbins}, and mechanical metamaterials~\cite{bertoldi}.  Many of these engineered platforms exhibit malleable nonlinear responses at sufficiently high excitation levels. Understanding the dynamics of energy localization in discrete nonlinear systems is of fundamental importance in their design. Here, we consider a well-known class of spatially localized and time-periodic solutions called \textit{discrete breathers}~\cite{flasch08} in the honeycomb lattice. Most of the literature on discrete breathers concerns one-dimensional lattices or rectangular lattices of higher dimensions.  However, several studies on the properties of discrete breathers in hexagonal and honeycomb lattices are given in \cite{kouk,kouk2,dai,wattis,palermo}.

Recently, there has been a renewed interest from the condensed matter community in a class of quasi-particle known as the \textit{semi-Dirac fermion}. The semi-Dirac fermion has the peculiar property that its energy dispersion relation is \underline{linear} in one direction and \underline{quadratic} in the orthogonal direction. This implies that the quasi-particle is \textit{massive} moving in one direction, $E(\mathbf{k}=(k_x,0))=\hbar^2k_x^2/2m$, and \textit{massless} moving in the perpendicular direction, $E(\mathbf{k}=(0,k_y))=\hbar vk_y$, where $E$ is energy, $\mathbf{k}$ is quasi-momentum, $m$ is mass, $\hbar$ is Planck's constant, and $v$ is the quasi-particle's effective velocity.  The semi-Dirac fermion was theoretically predicted to exist in a graphene-like structure in 2008~\cite{dietl}.  In 2024, it was experimentally observed for the first time in the topological metal ZrSiS using spectroscopy~\cite{basov}.

In the current work, we study the properties of discrete breathers in a conservative mass-spring system arranged in a honeycomb lattice with an in-cell cubic type nonlinearity.  The dimerized linear couplings of the sites in the lattice we consider are analogous to those of the one-dimensional Su-Schrieffer-Heeger (SSH) model for polyacetylene\cite{ssh79}.  Nonlinear extensions of the SSH model are analyzed in \cite{Hof:23, Li:24}. Here, we focus on discrete breathers that have a frequency located inside the honeycomb lattice's phonon bandgap where the dispersion bands approach the energy landscape of a semi-Dirac fermion.   Our model is described in the following subsection.

In Section \ref{sec:pde}, we derive an effective PDE description for the envelopes of weakly nonlinear wavepackets with wavevectors concentrated around a critical point of the lattice's band structure where the bands nearly meet in a semi-Dirac crossing.  We then obtain exact expressions for line solitons to the 2D PDE model, exponentially localized along the spatial $Z$ direction and constant along the transverse spatial $H$ direction.  Furthermore, using these expressions, we derive spatially separable asymptotic solutions in the limit $Z\to\pm\infty$ that are also exponentially localized in $H$.  These separable solutions can be formally expressed as the product of two exact one-dimensional solitons.

Section \ref{sec:anti} specifies conditions for the existence of discrete breathers in the lattice and provides an iterative scheme to compute their spatiotemporal profiles along with their dynamic stability from exact seed solutions. We find that the asymptotic envelopes derived in Section \ref{sec:pde} accurately represent the extended spatial tails of the computed breathers near the semi-Dirac crossing.  Importantly, our combination of analytic and numerical methods near opposing limits (the continuum and anti-continuum limits) allows us to fully describe breather characteristics not captured by either asymptotic theory alone. 

Finally, in Section \ref{sec:plane} we study de-localized nonlinear plane waves that bifurcate from the trivial solution at the edges of the phonon gap and analyze their Floquet stability both perturbatively and numerically. Section \ref{sec:con} gives some concluding remarks.

\subsection{Lattice model and band structure}
\label{sec:model}
We consider a two-dimensional Hamiltonian mass-spring system arranged in a regular honeycomb pattern (see Figure \ref{fig1}(a)) with linear nearest-neighbor couplings.  The unit cell of the lattice consists of two sub-sites, labeled $A$ and $B$, with identical masses set to one. Here, we fix the \textit{intra-cell} coupling strength to unity and allow the \textit{inter-cell} coupling strength, denoted by $\lambda$, to vary in the interval $0\leq\lambda\leq 1$.

 \begin{figure}[h!]
  \centering
  \includegraphics[scale=0.55]{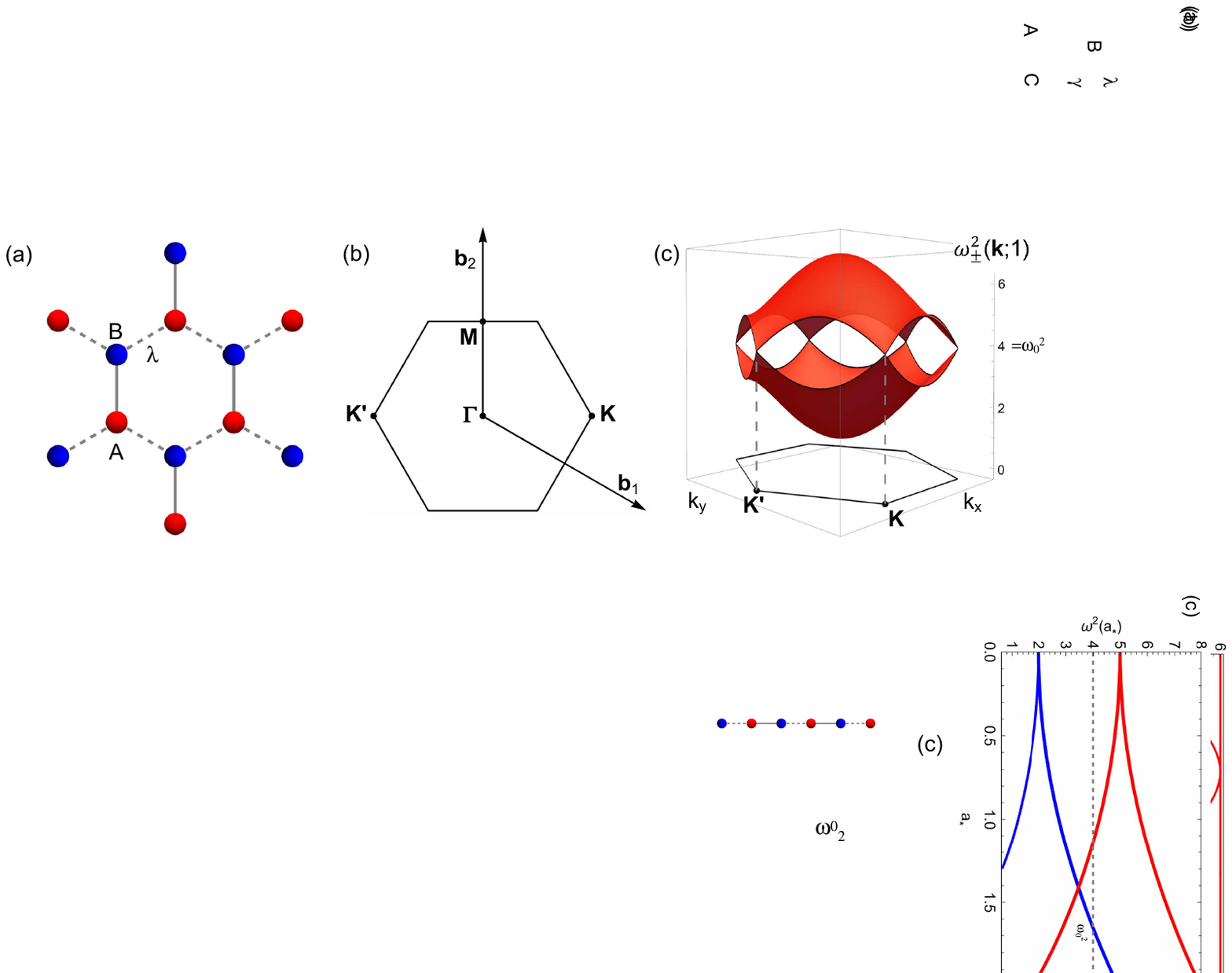}
  \caption{(a) schematic of honeycomb lattice with A (red) and B (blue) sites and staggered coupling strengths 1 (bold) and $\lambda$ (dashed); (b) the first Brillouin zone, $\mathcal{B},$ and reciprocal lattice vectors; (c) the linear dispersion relation, $\omega_{\pm}^2(\mathbf{k};\lambda),$ in \eqref{eq2} over $\mathcal{B}$ for the case $\lambda=1$ and $\omega_0=2$.}
  \label{fig1}
\end{figure}

The family of dynamical equations on the lattice is given by 
\begin{align}\label{eq1}
&\ddot{x}_{n,m}^A=-\partial_{z_1}V(x_{n,m}^A,x_{n,m}^B)+\lambda\left[x_{n,m-1}^B+x_{n+1,m-1}^B\right]\\ \nonumber
&\ddot{x}_{n,m}^B=-\partial_{z_2}V(x_{n,m}^A,x_{n,m}^B)+\lambda\left[x_{n,m+1}^A+x_{n-1,m+1}^A\right],
\end{align}
where $(n,m)\in\mathbb{Z}^2$, $x_{n,m}(t)$ is the displacement of the mass of the $(n,m)$ point in the direction perpendicular to the lattice, and $V\in C^2(\mathbb{R}^2)$ is an in-cell potential. We assume that $V$ is a convex function near $(0,0)$.

Specifically, we consider the potential form,
\begin{equation}
\label{potential}
    V(z_1,z_2)=\dfrac{1}{2}\omega_0^2\left(z_1^2+z_2^2\right)-z_1z_2+
    \dfrac{1}{4}g\left(z_1^4+2z_1^2z_2^2+z_2^4\right), 
\end{equation} 
where $\omega_0^2>3$ is the on-site harmonic oscillator frequency and $g=\pm 1$ represents a \textit{hardening} or \textit{softening} nonlinearity, respectively.  

We fix the spatial scale in the plane of the lattice so that the nearest-neighbor distance is unity.  The infinite lattice is then generated by all integer linear combinations of the vectors 
\[\mathbf{a}_1=\sqrt{3}(1,0)^{\top}\text{ and } \mathbf{a}_2=\dfrac{\sqrt{3}}{2}(1,\sqrt{3})^{\top}.\]

Similarly, the reciprocal lattice, given in terms of the wavevector $\mathbf{k}=(k_x,k_y)^{\top}\in\mathbb{R}^2$, is spanned by integer multiples of the vectors
\[\mathbf{b}_1=2\pi\left(\dfrac{1}{\sqrt{3}},-\dfrac{1}{3}\right)^{\top} \text{ and } \mathbf{b}_2=2\pi\left(0,\dfrac{2}{3}\right)^{\top}.\] The hexagonal primitive cell or the first Brillouin zone, $\mathcal{B}$, of the reciprocal lattice is shown in Figure \ref{fig1}(b). $\mathcal{B}$ contains the labeled critical wavevectors $\mathbf{\Gamma}=(0,0)^{\top}$, $\mathbf{M}=(0,2\pi/3)^{\top}$, $\mathbf{K}=(4\pi/(3\sqrt{3}),0)^{\top}$, and $\mathbf{K'}=-\mathbf{K}$.  

Neglecting the nonlinear terms and inserting the plane-wave ansatz,
\[\begin{pmatrix}
    x_{n,m}^A \\
    x_{n,m}^B
\end{pmatrix}=\begin{pmatrix}
    a \\
    b
\end{pmatrix}e^{i(n\mathbf{a}_1\cdot \mathbf{k}+m\mathbf{a}_2\cdot \mathbf{k}-\omega t)} + c.c.,\]
in \eqref{eq1}, where $(a,b)^{\top}\in\mathbb{C}^2\backslash\{0\}$ and $c.c.$ denotes complex conjugation, leads to a $2\times 2$ self-adjoint eigenvalue problem that is continuously indexed by the wavevector $\mathbf{k}$.  The resulting real-valued dispersion relation or band structure of the linearization of \eqref{eq1} is given by
\begin{equation}
\label{eq2}
\omega_{\pm}^2(\mathbf{k};\lambda)=\omega_0^2\pm\sqrt{1+2\lambda^2+2\lambda^2\cos{\left(\sqrt{3}k_x\right)}+4\lambda\cos{\left(\dfrac{\sqrt{3}}{2}k_x\right)}\cos{\left(\dfrac{3}{2}k_y\right)}}.
\end{equation}

We refer to the upper (lower) dispersion band as the optical (acoustic) band.  When the intra-cell and inter-cell coupling strengths are equal (i.e. $\lambda=1$), the eigenvalue problem becomes degenerate and the two bands touch at $\mathbf{K}$ and $\mathbf{K'}$ due to the increased discrete rotational symmetry of the lattice's point group.  In a neighborhood of these high-symmetry points, the dispersion relation is approximately linear (see Figure \ref{fig1}(c)) and forms Dirac cones as in the electronic model of graphene~\cite{feff14}.  When the parameter $\lambda$ is tuned away from 1, a bandgap opens at both $\mathbf{K}$ and $\mathbf{K'}$. 

Here, we are interested in the formation of a \textit{semi-Dirac point} in the dispersion relation. Inspecting \eqref{eq2}, we find that the two bands also cross when $\lambda=\lambda_*:=1/2$ at the point $\mathbf{M}$ in the Brillouin zone. Furthermore, at $\lambda_*$ and in a neighborhood of $\mathbf{M}$, we calculate along the $k_y$-axis
\begin{align*} &\omega_{\pm}^2\left(k_x=0,k_y;\lambda_*\right)-\omega_0^2=\pm\dfrac{3}{2}\left|k_y-\dfrac{2\pi}{3}\right|+o\left(\left|k_y-\dfrac{2\pi}{3}\right|\right)\quad\text{as $k_y\to 2\pi/3$.}
\end{align*}
 In the perpendicular direction, we calculate
\begin{align*} &\omega_{\pm}^2\left(k_x,k_y=\dfrac{2\pi}{3};\lambda_*\right)-\omega_0^2=\pm\dfrac{3}{4}k_x^2+o\left(k_x^2\right)\quad\text{as $k_x\to 0$}.  
\end{align*}

The behavior of the band structure at the semi-Dirac point is illustrated in Figure \ref{fig2}.  For $\lambda<\lambda_*$, a direct bandgap opens at $\mathbf{M}$ with width $\omega_+^2(\mathbf{M};\lambda)-\omega_-^2(\mathbf{M};\lambda)=2(1-2\lambda)$ (see the bottom panels in Figure \ref{fig2}). For $\lambda_*<\lambda\leq 1$, the bandgap remains closed (not shown). 
 \begin{figure}[h!]
  \centering
  \includegraphics[scale=0.7]{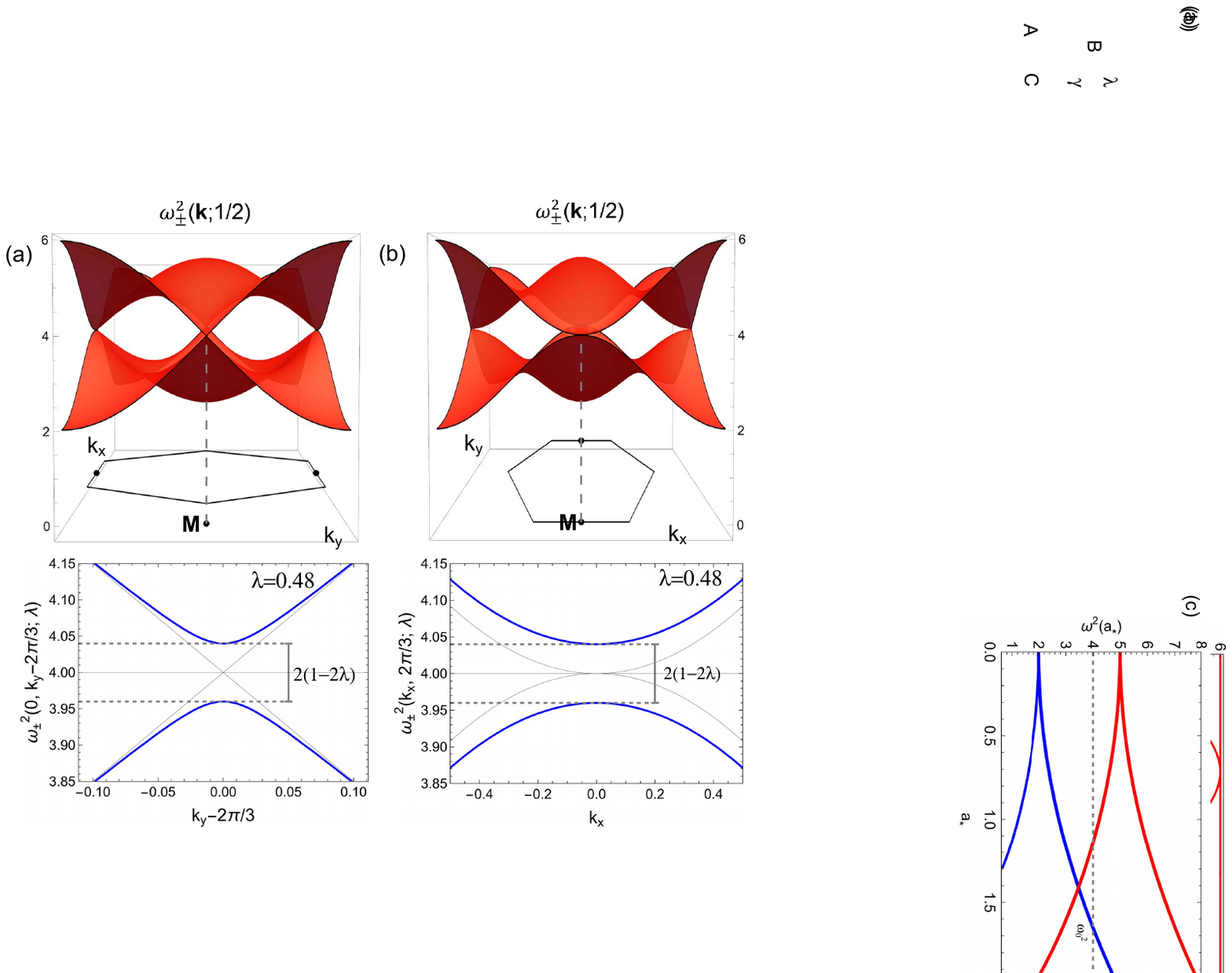}
  \caption{\underline{Top panels}: dispersion relation, \eqref{eq2}, plotted over $\mathcal{B}$ when $\lambda=\lambda_*$, showing the semi-Dirac point along the (a) $k_y-$ and (b) $k_x-$directions. \underline{Bottom panels}: zoomed-in view of the dispersion relation, \eqref{eq2}, centered at $\mathbf{M},$ with an open bandgap (blue curves) for a value $\lambda<\lambda_*$ again along the (a) $k_y-$ and (b) $k_x-$directions.}
  \label{fig2}
\end{figure}

\subsection{Acknowledgements} 
This work was supported by the Air Force Office of Scientific Research (AFOSR) Grant No. FA9550-23-1-0144.
\newpage
\section{Continuum regime near a semi-Dirac point}\label{sec:pde}
\subsection{PDE approximation} To study the behavior of weakly nonlinear wave packets in \eqref{eq1}, centered on the wavevector $\mathbf{M}$ and having a frequency located inside the phonon bandgap (see Figure \ref{fig2}), we define the parameter $\epsilon:=1-2\lambda$.  We consider the regime $0<\epsilon\ll 1$, which corresponds to $\lambda$ taking a value just below $\lambda_*$.

We begin by rewriting system \eqref{eq1} in terms of $\epsilon$:
\begin{align*}
&\ddot{x}_{n,m}^A=-\partial_{z_1}V(x_{n,m}^A,x_{n,m}^B)+\dfrac{1}{2}\left[x_{n,m-1}^B+x_{n+1,m-1}^B\right] 
-\dfrac{\epsilon}{2}\left[x_{n,m-1}^B+x_{n+1,m-1}^B\right]\\ \nonumber
&\ddot{x}_{n,m}^B=-\partial_{z_2}V(x_{n,m}^A,x_{n,m}^B)+\dfrac{1}{2}\left[x_{n,m+1}^A+x_{n-1,m+1}^A\right]-\dfrac{\epsilon}{2}\left[x_{n,m+1}^A+x_{n-1,m+1}^A\right].
\end{align*}

We introduce the following scaled variables to center the analysis at $\mathbf{M}$ and balance the cubic nonlinearity in \eqref{eq1}, 
\begin{align*}
\begin{pmatrix} x_{n,m}^A \\
x_{n,m}^B\end{pmatrix}=\sqrt{\epsilon}\exp\left(i\left(n\mathbf{a}_1+m\mathbf{a}_2\right)\cdot\mathbf{M}\right)\begin{pmatrix} Y_{n,m}^A \\
Y_{n,m}^B\end{pmatrix}
=\sqrt{\epsilon}\exp\left(i\pi m\right)\begin{pmatrix} Y_{n,m}^A \\
Y_{n,m}^B\end{pmatrix}.
\end{align*}

Upon making these substitutions, the system \eqref{eq1} becomes
\begin{align}\label{capy}
\ddot{Y}_{n,m}^A=-\omega_0^2Y_{n,m}^A+&\left[Y_{n,m}^B-\dfrac{1}{2}Y_{n,m-1}^B-\dfrac{1}{2}Y_{n+1,m-1}^B\right]\\ \nonumber 
&+\dfrac{\epsilon}{2}\left[Y_{n,m-1}^B+Y_{n+1,m-1}^B\right]-\epsilon g\left[\left(Y_{n,m}^A\right)^2+\left(Y_{n,m}^B\right)^2\right]Y_{n,m}^A\\ \nonumber
\ddot{Y}_{n,m}^B=-\omega_0^2Y_{n,m}^B+&\left[Y_{n,m}^A-\dfrac{1}{2}Y_{n,m+1}^A-\dfrac{1}{2}Y_{n-1,m+1}^A\right]\\ \nonumber
&+\dfrac{\epsilon}{2}\left[Y_{n,m+1}^A+Y_{n-1,m+1}^A\right]-\epsilon g\left[\left(Y_{n,m}^A\right)^2+\left(Y_{n,m}^B\right)^2\right]Y_{n,m}^B,
\end{align}
where we have used the potential form given in \eqref{potential}.

We define the Fourier representation on the lattice,
\begin{equation}
\begin{pmatrix}
    \hat{Y}_{\mathbf{k}}^A \\
    \hat{Y}_{\mathbf{k}}^B
\end{pmatrix}=
\sum_{(n,m)\in\mathbb{Z}^2}\begin{pmatrix}
    Y_{n,m}^A \\
    Y_{n,m}^B
\end{pmatrix}e^{-i(n\mathbf{a}_1+m\mathbf{a}_2)\cdot \mathbf{k}},
\end{equation}
and its inverse,
\begin{equation}\label{finv}
\begin{pmatrix}
    Y_{n,m}^A \\
    Y_{n,m}^B
\end{pmatrix}
=
\dfrac{1}{\mathcal{A}}\int_{\mathbf{k}\in\mathcal{B}}\begin{pmatrix}
    \hat{Y}_{\mathbf{k}}^A \\
    \hat{Y}_{\mathbf{k}}^B
\end{pmatrix}e^{i(n\mathbf{a}_1+m\mathbf{a}_2)\cdot \mathbf{k}}d\mathbf{k},
\end{equation}
where $\mathcal{A}$ is the area of the Brillouin zone. 

Since we are interested in wavepackets concentrated at $\mathbf{M}$ with a spectral width of the order $\epsilon$, we seek solutions of the form 
\begin{equation}\label{hhat}
\hat{Y}_{\mathbf{k},\epsilon}^{A,B}:=\dfrac{1}{\epsilon^{3/2}}\chi^{A,B}\left(\dfrac{k_x}{\sqrt{\epsilon}},\dfrac{k_y}{\epsilon}\right),
\end{equation}
where $\chi^{A,B}(\mathbf{k})$ are smooth functions that rapidly decay away from the origin so that the integrals in \eqref{four} are convergent. The asymmetric dependence on $\epsilon$ in $k_x$ and $k_y$ is chosen to match the linear-parabolic curvature of the lattice's band structure \eqref{eq2} near $\mathbf{M}.$

We obtain the following approximate expressions for the out-of-cell terms in \eqref{capy} using \eqref{finv} and \eqref{hhat}, performing a Taylor series expansion in $\epsilon$ about the origin and keeping only the leading-order terms.
\begin{align}\label{four}
&Y_{n,m-1}^B+Y_{n+1,m-1}^B\approx\dfrac{1}{\mathcal{A}}\int\limits_{\mathbf{k}\in\mathbb{R}^2}
\left[2-3ik_y\epsilon-\dfrac{3}{4}k_x^2\epsilon\right]e^{i\left(\sqrt{3}(2n+m)\sqrt{\epsilon}k_x+3m\epsilon  k_y\right)/2}\chi^B(\mathbf{k})d\mathbf{k}\\ \nonumber
&Y_{n,m+1}^A+Y_{n-1,m+1}^A\approx\dfrac{1}{\mathcal{A}}\int\limits_{\mathbf{k}\in\mathbb{R}^2}
\left[2+3ik_y\epsilon-\dfrac{3}{4}k_x^2\epsilon\right]e^{i\left(\sqrt{3}(2n+m)\sqrt{\epsilon}k_x+3m\epsilon  k_y\right)/2}\chi^A(\mathbf{k})d\mathbf{k}.
\end{align}
Using \eqref{four}, we introduce the continuum approximation
\begin{align}\label{app}
&Y_{n,m-1}^B+Y_{n+1,m-1}^B\approx 2u_B(t,z,\eta)-2\partial_zu_B(t,z,\eta)+2\partial_{\eta}^2u_B(t,z,\eta)\\ \nonumber
&Y_{n,m+1}^A+Y_{n-1,m+1}^A\approx 2u_A(t,z,\eta)+2\partial_zu_A(t,z,\eta)+2\partial_{\eta}^2u_A(t,z,\eta)
\end{align}
evaluated at $z=m$ and $\eta=\sqrt{2}(2n+m),$ where the functions $u_{A,B}$ are inverse Fourier transforms of the functions $\chi^{A,B}$.  The preceding discussion also motivates the long spatial scalings $Z=\epsilon z$ and $H=\sqrt{\epsilon}\eta.$

Plugging \eqref{app} into \eqref{capy} leads to the PDE approximation of the lattice dynamics, centered on the semi-Dirac point, near the continuum limit 
\begin{align}\label{pde}
&\partial_t^2u_A=-\omega_0^2u_A+\partial_zu_B-\partial_{\eta}^2u_B+\epsilon\left(u_B-\partial_zu_B+\partial_{\eta}^2u_B\right)-\epsilon g\left(u_A^2+u_B^2\right)u_A\\ \nonumber
&\partial_t^2u_B=-\omega_0^2u_B-\partial_zu_A-\partial_{\eta}^2u_A+\epsilon\left(u_A+\partial_zu_A+\partial_{\eta}^2u_A\right)-\epsilon g\left(u_A^2+u_B^2\right)u_B.
\end{align}
\subsection{Multiple-scale analysis}
We define the functions of the multiple-scale independent variables
\[u_{A,B}(t,z,\eta,T,Z,H),\quad\text{where}\quad T=\epsilon t, Z=\epsilon z, H=\sqrt{\epsilon}\eta.\]
Furthermore, we consider the following expansion in powers of $\epsilon$
\[u_{A,B}(t,z,\eta,T,Z,H)=u_{A,B}^{(0)}(t,z,\eta,T,Z,H)+\epsilon u_{A,B}^{(1)}(t,z,\eta,T,Z,H)+\cdots\]

Substituting the above expressions into \eqref{pde} and collecting like-terms in powers of $\epsilon$ gives at zeroth-order:
\[\mathcal{L}_0\begin{pmatrix}
    u_A^{(0)}\\
    u_B^{(0)}
\end{pmatrix}=0,\]
where 
\[\mathcal{L}_0:=\begin{pmatrix}
    \partial_t^2+\omega_0^2 & -\partial_z+\partial_{\eta}^2 \\
    \partial_z+\partial_{\eta}^2 & \partial_t^2+\omega_0^2
\end{pmatrix}.\]
We consider time-harmonic solutions of the form
\begin{equation}
\begin{pmatrix}\label{har}
    u_A^{(0)}\\
    u_B^{(0)}
\end{pmatrix}=\begin{pmatrix}
    U(T,Z,H)\\
    V(T,Z,H)
\end{pmatrix}e^{i\omega_0 t}+c.c.
\end{equation}
where $U$ and $V$ are envelope functions depending only on the long and slow spatiotemporal scale variables.

At first-order, we obtain
\begin{align*}
\mathcal{L}_0\begin{pmatrix}
    u_A^{(1)}\\
    u_B^{(1)}
\end{pmatrix}=\begin{pmatrix}
-2\partial_t\partial_Tu_A^{(0)}+u_B^{(0)}+\partial_Zu_B^{(0)}-\partial_H^2u_B^{(0)}-g\left[\left(u_A^{(0)}\right)^2+\left(u_B^{(0)}\right)^2\right]u_A^{(0)}\\
-2\partial_t\partial_Tu_B^{(0)}+u_A^{(0)}-\partial_Zu_A^{(0)}-\partial_H^2u_A^{(0)}-g\left[\left(u_A^{(0)}\right)^2+\left(u_B^{(0)}\right)^2\right]u_B^{(0)}
\end{pmatrix}.
\end{align*}
Note, since \eqref{har} does not depend on $\eta$, the equation at order $\sqrt{\epsilon}$ in the expansion,
\[ -2\partial_H\partial_\eta u_{A,B}^{(0)}=0\]
is automatically satisfied.

Since we seek bounded solutions, plugging \eqref{har} into the right-hand side of the first-order system, we remove the \textit{secular terms} on the right-hand side by setting the terms with a short time dependence given by $\exp(i\omega_0 t)$ to zero. This leads to the following PDE system
\begin{align}\label{final}
&-2i\omega_0\partial_TU+V+\partial_ZV-\partial_H^2V-2g\left(|U|^{2}+|V|^{2}\right)U=0\\ \nonumber
&-2i\omega_0\partial_TV+U-\partial_ZU-\partial_H^2U-2g\left(|U|^{2}+|V|^{2}\right)V=0.
\end{align}
We refer to \eqref{final} as the semi-Dirac equation.  We note that a similar equation was derived in \cite{dohnal} to model the propagation of optical solitons through planar structures with periodic gratings. However, unlike the model in \cite{dohnal}, equation \eqref{final} has a true bandgap (see \eqref{sddr}).

\subsection{Stationary solutions}
The stationary form of the semi-Dirac equation is found by plugging in the ansatz,
\[\left(U, V\right)^{\top}=\left(U_{}(Z,H;\nu), V(Z,H;\nu)\right)^{\top}\exp\left(i\nu T/(2\omega_0)\right)=:\mathbf{\Psi}\exp\left(i\nu T/(2\omega_0)\right),\]
into \eqref{final}.  

Neglecting the nonlinear terms in the resulting equation, we obtain the self-adjoint spectral problem in $L^2(\mathbb{R}^2,\mathbb{C}^2)$, 
\[\mathcal{L}_{SD}\mathbf{\Psi}:=\left[i\sigma_2\partial_Z+\sigma_1\left(1-\partial_H^2\right)\right]\mathbf{\Psi}=-\nu\mathbf{\Psi},\]
where $\sigma_{j}$ $(j=1,2,3)$ are Pauli matrices and $\sigma_0$ is the identity matrix.  The domain of the operator $\mathcal{L}_{SD}$ is the anisotropic Sobolev space is $\mathcal{D}(\mathcal{L}_{SD}):=\text{H}_Z^1\text{H}_H^2\left(\mathbb{R}^2,\mathbb{C}^2\right)$.  The spectrum of $\mathcal{L}_{SD}$ is gapped and purely essential, given by
\[ \sigma(\mathcal{L}_{SD})=(-\infty, -1]\cup[1,\infty),\]
which can be shown by constructing a Weyl singular sequence of approximate eigenfunctions in $\mathcal{D}(\mathcal{L}_{SD})$~\cite{sigal}.  Note also that since $\mathcal{L}_{SD}$ anticommutes with $\sigma_3$ (it depends only on $\sigma_1$ and $\sigma_2$), the spectrum is symmetric about zero.
\\
\\
\underline{Remark}: Due to the form of nonlinearity, if $\mathbf{\Psi}$ is a stationary solution to \eqref{final} for $g=1$ at $\nu$, then $\sigma_3\mathbf{\Psi}$ is a stationary solution to \eqref{final} for $g=-1$ at $-\nu$. 
\\

Localized solutions to the stationary semi-Dirac equation correspond to the critical points of the functional 
\begin{equation}\label{functional}
    I_{\nu}:=\int_{\mathbb{R}^2}\left(\big\langle\mathcal{L}_{SD}\mathbf{\Psi},\mathbf{\Psi}\big\rangle +\nu |\mathbf{\Psi}|^2-g|\mathbf{\Psi}|^{4}\right)d\mathbf{x}
\end{equation}
for $\mathbf{\Psi}\in\mathcal{D}(\mathcal{L}_{SD}).$  

Like the ``energy" functional for the nonlinear \textit{Dirac equation}, where $\mathcal{L}_{SD}$ in \eqref{functional} is replaced by the operator
\begin{equation}\label{dirac}
\mathcal{L}_D\mathbf{\Psi}:=\left[-i\sigma_1\partial_Z-i\sigma_2\partial_H+\sigma_3\right]\mathbf{\Psi},
\end{equation}
$I_{\nu}$ is strongly indefinite, making variational arguments about the existence of critical points more difficult than those for nonlinear Sch\"{o}dinger (NLS) equations, where the spectrum is bounded from below~\cite{sere,IW10}. 

For the \textit{1D Dirac equation}, explicit localized stationary solutions (gap solitons) have been obtained in the literature for various nonlinearities~\cite{pelinovsky1, sax}. For instance, the massive Thirring~\cite{thirring} model is used to describe self-interacting fermions in one spatial dimension with a Lorentz-invariant form of nonlinearity and it is known to be completely integrable.  Another example is the explicit gap solitons found in \cite{aceves} to a 1D Dirac equation with a cubic Kerr nonlinearity used to model laser propagation through optical fibers with periodic gratings.

In the following theorem, we find explicit line-soliton solutions to \eqref{final} using known solutions to a related equation and exploiting the unitary symmetry of both systems.  Within the bandgap, the stationary solutions obtained are smooth and exponentially localized in the $Z$ direction, taking a particularly simple form at the center of the bandgap ($\nu=0$).  The decay rate of the solution in the $Z$ direction becomes algebraic in the limit approaching one edge of the gap and the solution is identically zero in the limit at the other edge. In Section \ref{sec:anti} we compare the derived solitons with numerically computed discrete breathers near the semi-Dirac point. 
\begin{theorem}\label{gapsol}
The semi-Dirac equation, \eqref{final}, has a family of explicit stationary solutions, $\mathbf{\Phi}(Z;\nu)=\left(\Phi_1,\Phi_2\right)^{\top}$, which are localized in the $Z$ direction and constant in the $H$ direction (i.e., line-solitons) for each $\nu\in(-1,1)$.  For $g=1$, we have
\begin{enumerate}
    \item $\lim_{\nu\to +1}\mathbf{\Phi}(Z;\nu)=\dfrac{1+i}{\sqrt{2}(1+4Z^2)}\left(1-2Z,1+2Z\right)^{\top}$
    \item $\mathbf{\Phi}(Z;0)=\dfrac{1+i}{2\sqrt{2}}\textnormal{sech}\left(2Z\right)\left(e^{-Z},e^{Z}\right)^{\top}$
    \item $\lim_{\nu\to-1}\mathbf{\Phi}(Z;\nu)=(0,0)^{\top}.$
\end{enumerate}
\end{theorem}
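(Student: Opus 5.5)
The plan is to reduce \eqref{final} to a planar Hamiltonian ODE and integrate its homoclinic orbit explicitly. We look for solutions that do not depend on $H$, so every $\partial_H^2$ term drops out, and we insert the stationary ansatz $\mathbf{\Psi}\exp(i\nu T/(2\omega_0))$. Since $-2i\omega_0\partial_T$ acting on this ansatz produces the factor $\nu$, \eqref{final} becomes the 1D system
\begin{align*}
&\nu U+V+V_Z-2g\left(|U|^2+|V|^2\right)U=0,\\
&\nu V+U-U_Z-2g\left(|U|^2+|V|^2\right)V=0.
\end{align*}
This is a 1D Dirac-type system. Its mass term is $\sigma_1$ and its derivative term is $i\sigma_2\partial_Z$.

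The first step is a unitary change of variables that diagonalizes the mass term. Set $P=U+V$ and $Q=U-V$, so that $|U|^2+|V|^2=(|P|^2+|Q|^2)/2$. Adding and subtracting the two equations gives
\[
Q_Z=(1+\nu)P-g\left(|P|^2+|Q|^2\right)P,\qquad P_Z=(1-\nu)Q+g\left(|P|^2+|Q|^2\right)Q.
\]
This is of the same Kerr coupled-mode type as the gap-soliton equations of \cite{aceves}. The system is invariant under a global phase, so we restrict to $(P,Q)=e^{i\pi/4}(p,q)$ with $p,q$ real. This constant phase is the source of the prefactor $(1+i)/\sqrt{2}$ in the statement. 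The real system is Hamiltonian, with conserved quantity
\[
\mathcal{H}=\tfrac12(1-\nu)q^2-\tfrac12(1+\nu)p^2+\tfrac{g}{4}\left(p^2+q^2\right)^2 .
\]
For $|\nu|<1$ the origin is a saddle, with eigenvalues $\pm\sqrt{1-\nu^2}$. Localized solutions therefore lie on the level set $\mathcal{H}=0$.

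The next step is to solve on that level set. Write $p=r\cos\varphi$ and $q=r\sin\varphi$. The condition $\mathcal{H}=0$ gives $r^2$ as an explicit function of $\varphi$, namely $g r^2=2\left[(1+\nu)\cos^2\varphi-(1-\nu)\sin^2\varphi\right]$. Substituting this into the angular equation $\varphi_Z=(qq_Z\dots)/r^2$ yields a separable first-order ODE for $\varphi$, of Riccati type in $\tan\varphi$. Integrating it should give $\tan\varphi=-\sqrt{(1+\nu)/(1-\nu)}\tanh\!\big(\sqrt{1-\nu^2}\,Z\big)$ or a similar expression. This formula gives the explicit family $\mathbf{\Phi}(Z;\nu)$, which decays like $e^{-\sqrt{1-\nu^2}|Z|}$. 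Undoing $U=(P+Q)/2$ and $V=(P-Q)/2$ then recovers $\mathbf{\Phi}$.

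The three items follow by checking against this formula. At $\nu=0$, item 2 can be verified directly: $P$ is proportional to $\operatorname{sech}(2Z)\cosh Z$ and $Q$ to $-\operatorname{sech}(2Z)\sinh Z$. As $\nu\to+1$, the coefficient $1-\nu$ vanishes, the decay rate $\sqrt{1-\nu^2}$ goes to zero, and the $\tanh$ must be expanded to first order. This yields the algebraic profile $\propto(1\mp2Z)/(1+4Z^2)$ of item 1. As $\nu\to-1$, the bound $r^2\le 2(1+\nu)/g$ forces the amplitude to zero, which gives item 3.

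The main obstacle is the explicit integration of the angular ODE with the right constants. I would first fix the normalization by matching item 2, and then check that the $\nu\to\pm1$ limits are uniform in $Z$.
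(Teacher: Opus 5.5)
Your route is correct and differs from the paper's. The paper integrates nothing. It imports the explicit solution $U_\nu$ of the scalar equation $iU_\nu'+\nu U_\nu+U_\nu^*-4g|U_\nu|^2U_\nu=0$ from Pelinovsky--Shimabukuro, notes that $(U_\nu,U_\nu^*)^\top$ then solves $(\sigma_1+i\sigma_3\partial_Z+\nu)\mathbf{\Psi}=2g|\mathbf{\Psi}|^2\mathbf{\Psi}$, and conjugates by $\mathcal{U}=(\sigma_0+i\sigma_1)/\sqrt{2}$. This conjugation fixes $\sigma_1$, sends $\sigma_3$ to $\sigma_2$, and preserves $|\mathbf{\Psi}|$. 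Your real reduction is exactly that ansatz with $U_\nu=(p+iq)/2$, since $\mathcal{U}(U_\nu,U_\nu^*)^\top=e^{i\pi/4}\big((p+q)/2,(p-q)/2\big)^\top$. Your phase-plane argument therefore rederives $U_\nu$ from scratch.

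The angular equation should read $\varphi_Z=(pq_Z-qp_Z)/r^2$. On $\mathcal{H}=0$ it becomes $\varphi_Z=(1-\nu)\sin^2\varphi-(1+\nu)\cos^2\varphi$, i.e.\ $w_Z=(1-\nu)w^2-(1+\nu)$ for $w=\tan\varphi$. Its solution with $w(0)=0$ is precisely your guess. With $\kappa=\sqrt{1-\nu^2}$ one has $r^2=2(1+\nu)\cos^2\varphi\,\mathrm{sech}^2(\kappa Z)$, which gives, on the loop with $\cos\varphi>0$,
\[
p=\frac{\sqrt{2}\,\kappa\sqrt{1-\nu}\,\cosh(\kappa Z)}{\cosh(2\kappa Z)-\nu},\qquad q=-\frac{\sqrt{2}\,\kappa\sqrt{1+\nu}\,\sinh(\kappa Z)}{\cosh(2\kappa Z)-\nu}.
\]
This is $2(\mathrm{Re}\,U_\nu,\mathrm{Im}\,U_\nu)$. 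Your route buys self-containedness and clean limits. The bound $r^2\le 2(1+\nu)$ gives item 3 uniformly in $Z$. As $\nu\to 1$ one has $\tan\varphi\to -2Z$ and $r^2\to 4\cos^2\varphi$, hence $p\to 2/(1+4Z^2)$ and $q\to -4Z/(1+4Z^2)$, which is exactly item 1. The paper's route is shorter but rests entirely on the cited formula.

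Drop the step ``fix the normalization by matching item 2'': there is no normalization to fix. The cubic term breaks amplitude scaling, and $\mathcal{H}=0$ pins $r$; only the phase and a $Z$-translation are free. For example, $|\mathbf{\Phi}(0;\nu)|^2=r(0)^2/2=1+\nu$.

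Your formulas give $\mathbf{\Phi}(Z;0)=\frac{1+i}{2}\,\mathrm{sech}(2Z)\,(e^{-Z},e^{Z})^\top$, not the printed prefactor $\frac{1+i}{2\sqrt{2}}$, and here the statement is at fault. Insert $c(1+i)\,\mathrm{sech}(2Z)(e^{-Z},e^{Z})^\top$ into the first stationary equation at $Z=0$: you get $2c(1+i)=8|c|^2c(1+i)$, so $|c|=1/2$. The printed item 2 would solve the equation only with $g=2$.

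The slip is in the paper's formula for $\mathbf{\Phi}$. Its prefactor $\frac{1+i}{2}$ should be $\frac{1+i}{\sqrt{2}}$, because $\mathcal{U}(U_\nu,U_\nu^*)^\top=\frac{1+i}{\sqrt{2}}(\mathrm{Re}\,U_\nu+\mathrm{Im}\,U_\nu,\,\mathrm{Re}\,U_\nu-\mathrm{Im}\,U_\nu)^\top$. Item 1 as printed already carries the correct factor. Rescaling your solution to match item 2 would leave the solution set and break item 1, so conclude item 2 with prefactor $\frac{1+i}{2}$ instead.
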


\noindent\underline{Remark}: By the preceding remark, the above theorem holds when $g=-1$ by replacing $\nu$ with $-\nu.$
\begin{proof}
We introduce the related 1D system,
\begin{equation}\label{pel}
  \left(\sigma_1+i\sigma_3\partial_Z+\nu\sigma_0\right)\mathbf{\Psi}=2g|\mathbf{\Psi}|^2\mathbf{\Psi},  
\end{equation}
like the one considered in \cite{pelinovsky2}, for $\mathbf{\Psi}=\left(U(Z), V(Z)\right)^{\top}.$  Making the ansatz, $\mathbf{\Psi}_{\nu}=\left(U_{\nu}(Z), U_{\nu}^*(Z)\right)^{\top},$ where $z^*$ denotes the complex conjugate of $z\in\mathbb{C}$, \eqref{pel} becomes the scalar equation
\[i\dfrac{dU_{\nu}}{dZ}+\nu U_{\nu}+U_{\nu}^*-4g|U_{\nu}|^2U_{\nu}=0,\]
which has the explicit smooth and exponentially localized solution family
\begin{equation}\label{sol}
    U_{\nu}(Z)=\dfrac{\sqrt{1-\nu^2}\left[\sqrt{1-\nu}\text{cosh}\left(\sqrt{1-\nu^2}Z\right)-i\sqrt{1+\nu}\text{sinh}\left(\sqrt{1-\nu^2}Z\right)\right]}{\sqrt{2}\left[\text{cosh}\left(2\sqrt{1-\nu^2}Z\right)-\nu\right]}
\end{equation}
for $\nu\in(-1,1)$~\cite{pelinovsky2}.

If $\mathbf{\Psi}$ is a solution to \eqref{pel}, then $\mathbf{\tilde{\Psi}}:=\mathcal{U}\mathbf{\Psi},$ for $\mathcal{U}\in\text{SU}(2),$ is a solution to 
\[\mathcal{U}\left(\sigma_1+i\sigma_3\partial_Z+\nu\sigma_0\right)\mathcal{U}^{\dagger}\mathbf{\tilde{\Psi}}=2g|\mathbf{\tilde{\Psi}}|^2\mathbf{\tilde{\Psi}},\]
where $\mathcal{U}^{\dagger}$ denotes the conjugate transpose of the matrix $\mathcal{U}$. 

Letting $\mathcal{U}=\left(\sigma_0+i\sigma_1\right)/\sqrt{2},$ we obtain the stationary semi-Dirac equation,
\[\left(\mathcal{L}_{SD}+\nu\right)\mathbf{\tilde{\Psi}}=2g|\mathbf{\tilde{\Psi}}|^2\mathbf{\tilde{\Psi}}.\]
Using \eqref{sol}, we get the solution to the semi-Dirac equation
\begin{equation}\label{gapsoleq}
\mathbf{\Phi}(Z;\nu)=\dfrac{1+i}{2}\big(\text{Re}\left(U_{\nu}\right)+\text{Im}\left(U_{\nu}\right), \text{Re}\left(U_{\nu}\right)-\text{Im}\left(U_{\nu}\right)\big)^{\top}.
\end{equation}
 Taking limits gives the result.
\end{proof}

\underline{Remark}: The line-solitons above are infinite energy states in the bandgap since they are de-localized in the $H$ direction.  A formal proof for the existence or non-existence of fully \textit{localized solutions in 2D} to the semi-Dirac equation, \eqref{final}, is still open.  The variational arguments in \cite{sere} do not apply for the nonlinearity in \eqref{final}. It was shown in \cite{borelli} that the 2D \textit{Dirac equation}, with the linear terms in \eqref{dirac} and the same form of nonlinearity considered here, admits smooth and exponentially localized solutions inside the gap using a shooting method argument.  However, the result relies on a radially symmetric solution ansatz, originally introduced in \cite{caz}, to solve the compactness issue of a critical Sobolev embedding in $L^4(\mathbb{R}^2,\mathbb{C}^2)$, which is not useful for the semi-Dirac equation due to the anisotropic appearance of spatial derivatives in $\mathcal{L}_{SD}.$  
\newline

From the 1D stationary solution \eqref{gapsoleq} in Theorem \eqref{gapsol} it can be shown that  
\[\mathbf{\Phi}(Z;\nu)\sim\dfrac{\sqrt{1-\nu^2}}{4}(1+i)\text{sech}\left(\sqrt{1-\nu^2}Z\right)\left[1,1\right]^{\top}\text{ as }\nu\to-1.\]

Indeed, near the edges of the essential spectrum, the NLS equation is the leading-order approximation to the semi-Dirac equation. Heuristically, near $\nu=-1$, if we ignore the other branch of the spectrum in the semi-Dirac equation, after a simple re-scaling we obtain the following \textit{focusing} NLS equation 
\[\left(\mathcal{L}_{SD}+\nu-\left|\mathbf{\Psi}\right|^2\right)\mathbf{\Psi}\approx\left(1+\nu-\Delta-\left|\mathbf{\Psi}\right|^2\right)\mathbf{\Psi},\]
which is obtained using the Taylor expansion of the negative branch of the semi-Dirac equation's dispersion relation,
\begin{equation}\label{sddr}
   \nu_{\pm}(\mathbf{k})=\pm\sqrt{(1+k_H^2)^2+k_Z^2}, 
\end{equation}
near the origin.  Making the same approximation near $\nu=1$ gives the \textit{de-focusing} NLS equation which is known to not admit localized solutions~\cite{fibich}. This makes it clear that both branches of the essential spectrum shape gap solitons, particularly for $\nu$ away from $-1$.

Looking for localized solutions in both $Z$ and $H$, we make the following separable solution ansatz for the semi-Dirac equation
\begin{equation}\label{anz}
\begin{pmatrix}
    U(Z,H;\nu) \\
    V(Z,H;\nu)
    \end{pmatrix}
=\begin{pmatrix}
\Phi_1(Z;\nu)G_{1}(H;\nu) \\
\Phi_2(Z;\nu)G_{2}(H;\nu)
    \end{pmatrix},
\end{equation}
where $\mathbf{
\Phi}(Z;\nu)$ is the localized 1D solution from Theorem \ref{gapsol}.  

Plugging into the stationary form of \eqref{final} gives
\begin{align*}
& F_1(\mathbf{\Phi}):=\dfrac{2g\left(|\Phi_1|^2+|\Phi_2|^2\right)\Phi_1}{\Phi_2}=\dfrac{G_{1}''}{\left(1-|G_{1}|^2\right)G_{1}}\\
&F_2(\mathbf{\Phi}):=\dfrac{2g\left(|\Phi_1|^2+|\Phi_2|^2\right)\Phi_2}{\Phi_1}=\dfrac{G_{2}''}{\left(1-|G_{2}|^2\right)G_{2}}.
\end{align*}
We note that $F_1$ and $F_2$ are real-valued functions.  Using the expression for the gap soliton in item $2.$ of Theorem \ref{gapsol} at the middle of the gap ($\nu=0$) and for $g=1$, we simplify to obtain
\[F_1\left(\mathbf{\Phi}(Z;0)\right)=1-\tanh(2Z)\quad\text{ and }\quad F_2\left(\mathbf{\Phi}(Z;0)\right)=1+\tanh(2Z).\]

We then have the limiting behaviors as $Z\to\pm\infty$,
\begin{align*}
&F_{1,-}:=\lim_{Z\to-\infty}F_1\left(\mathbf{\Phi}(Z;0)\right)=2\implies -2G_{1,-}+G''_{1,-}+2|G_{1,-}|^2G_{1,-}=0\\
&F_{2,-}:=\lim_{Z\to-\infty}F_2\left(\mathbf{\Phi}(Z;0)\right)=0\implies G''_{2,-}=0
\end{align*}
and
\begin{align*}
&F_{1,+}:=\lim_{Z\to+\infty}F_1\left(\mathbf{\Phi}(Z;0)\right)=0\implies G''_{1,+}=0 \\
&F_{2,+}:=\lim_{Z\to+\infty}F_2\left(\mathbf{\Phi}(Z;0)\right)=2\implies-2G_{2,+}+G''_{2,+}+2|G_{2,+}|^2G_{2,+}=0,
\end{align*}
where the $\pm$ subscripts above denote the functions at these limits.
It is clear that $F_{1,2}\left(\mathbf{\Phi}(Z;0)\right)$ approach these constant values at an exponential rate. Indeed, we have $|F_{j,\pm}-F_j|=\text{O}\left(\exp(\mp 4Z)\right)$ for $j=1,2$ as $Z\to\pm\infty.$ The nontrivial limiting equation for $G_{j,\pm}$ above is the scalar focusing NLS equation. 

Thus, although we do not have a global localized solution to the stationary form of \eqref{final}, we have the two separable bounded solutions as $Z\to\pm\infty$ given, respectively, by
\begin{equation}\label{sep}
\mathbf{\Psi}_{-}:=\begin{pmatrix}\Phi_1(Z;0)G_{1,-}(H;0)\\
0
\end{pmatrix}\quad\text{ and }\quad\mathbf{\Psi}_+:=\begin{pmatrix}0\\
\Phi_2(Z;0)G_{2,+}(H;0)
\end{pmatrix},
\end{equation}
 where
\[G_{1,-}(H;0)=G_{2,+}(H;0)=\sqrt{2}\text{sech}\left(\sqrt{2}H\right).\]
We note that the separation in \eqref{anz} only gives a nontrivial result in the middle of the gap. Indeed, we have $F_{1,2}\left(\mathbf{\Phi}(Z;\nu)\right)\to 0$ as $Z\to\pm\infty$ for nonzero $\nu\in(-1,1).$

In what follows, we numerically compute mid-gap discrete breathers on the lattice near the continuum limit at which the effective semi-Dirac equation approximation is valid.  We find that the spatial tails of these breathers are accurately captured by these exact separable profiles for sufficiently large $Z$. 

\section{Anti-continuum regime and discrete breathers}\label{sec:anti}
\subsection{Existence of discrete breathers}

We rewrite equation \eqref{eq1} in the form
\begin{equation}\label{disc}
F(X^{\lambda}(t);\lambda):=\begin{pmatrix} 
\ddot{x}_{n,m}^A+\partial_{z_1}V(x_{n,m}^A,x_{n,m}^B)-\lambda\left[x_{n,m-1}^B+x_{n+1,m-1}^B\right]\\ 
\ddot{x}_{n,m}^B+\partial_{z_2}V(x_{n,m}^A,x_{n,m}^B)-\lambda\left[x_{n,m+1}^A+x_{n-1,m+1}^A\right]\end{pmatrix}=\begin{pmatrix} 0 \\
0
\end{pmatrix}.
\end{equation}
We restrict the domain of the mapping $F(X^{\lambda};\lambda)$ to square-summable sequences, that is $X^{\lambda}(t)\in \ell^2\left(\mathbb{Z}^2,\mathbb{R}^2\right)\cong\ell^2(\mathbb{Z})$.  Additionally, we restrict the domain so that $X^{\lambda}(t)$ is $T_b$--periodic and even in time.  Finally, we restrict the domain so that $X^{\lambda}(t)$ has the required regularity, i.e. each of its elements is in the Sobolev space $\text{H}^2(\mathbb{R}).$  We denote the resulting domain of $F(X^{\lambda};\lambda)$ by $\mathcal{D}\times\mathbb{R},$ where
\[\mathcal{D}:=\left\{X(t)=\{X(t)\}_{n\in\mathbb{Z}}\in\ell^2(\mathbb{Z})\quad |\quad X(t)\in \text{H}^2(\mathbb{R}/\mathbb{Z}T_b)\text{ and }X(t)=X(-t)\right\}.\]

We then seek a nontrivial solution to \eqref{disc} in the simplified \textit{anti-continuum limit} ($\lambda=0$), $X^{0}\in\mathcal{D},$ for instance, of the form
\begin{equation}\label{anti}
    X_*(t):=X^0(t)=\left\{\cdots,0,0,\begin{pmatrix}x_{*}^A(t) \\
x_{*}^B(t)
    \end{pmatrix},0,0,\cdots\right\},
\end{equation}
with fixed frequency $\omega_b=2\pi/T_b.$  

Using the implicit function theorem~\cite{nirenberg, Hof:23}, it can be shown that $X^0$ smoothly continues in $0\leq\lambda<\lambda_b$, for some nonzero value $\lambda_b$, to $X^{\lambda}\in\mathcal{D}$ if two conditions are satisfied: 
\begin{enumerate}
 \item (Non-resonance) for all $n\in\mathbb{Z}$ we have that 
 \begin{equation}\label{nonres}
 \left(n\omega_b\right)^2\neq \Omega_{\pm}^2:=\omega_0^2\pm 1.
 \end{equation}
 \item (Non-degeneracy) the null-space of the operator
\begin{equation}\label{nondeg}
    L_*:=\sigma_0\dfrac{d^2}{dt^2}+\mathcal{H}_{V_*},
\end{equation}
where $\mathcal{H}_{V_*}$ is the Hessian matrix of the potential $V(z_1,z_2)$ evaluated at $\left(x_*^A, x_*^B\right)^{\top}$, is empty in $\mathcal{D}.$
\end{enumerate}

In the anti-continuum limit \eqref{anti}, there is an \textit{in-phase} and \textit{out-of-phase} one-parameter family of explicit solutions in the respective cases $g=\pm 1$ given by
\begin{equation}\label{cn}
\begin{pmatrix}
    x_*^A(t)\\
    x_*^B(t)
\end{pmatrix}^{\pm}=a_*\text{cn}\left(\sqrt{(\Omega_{\mp}^2\pm 2a_*^2)} t,\dfrac{\pm a_*^2}{(\Omega_{\mp}^2\pm 2a_*^2)} \right)\begin{pmatrix}
    1 \\
    \pm 1
\end{pmatrix},
\end{equation}
where $\text{cn}$ is the even Jacobi elliptic function.  These solutions are derived from the reduction of the anti-continuum limit system, which consists of two coupled second-order ODEs,  into a single equation as shown in \cite{Hof:23} in the in-phase and out-of-phase cases.  The resulting second-order ODE is in the form of the integrable Duffing equation with exact solutions~\cite{verhulst}.      

The periods of the solutions in \eqref{cn} are given by
\begin{equation}\label{per}
    T_b^{\pm}(a_*^2)=\dfrac{4}{\sqrt{\Omega_{\mp}^2\pm 2a_*^2}}\mathcal{K}\left(\sqrt{\dfrac{\pm a_*^2}{(\Omega_{\mp}^2\pm 2a_*^2)}}\right),
\end{equation}
where $\mathcal{K}$ is the complete elliptic integral of the first kind.  Note that periodic solutions exist in the anti-continuum limit, \eqref{cn}, with any frequency $\omega_b>\Omega_-$ in the hardening case and $0<\omega_b<\Omega_+$ in the softening case.   As the solution amplitude, $a_*$, approaches zero, the time-dependent term in \eqref{cn} approaches $a_*\cos(\Omega_{\mp}t)$ and $T_b^{\pm}$ in \eqref{per} approaches $2\pi/\Omega_{\mp}.$  Both the non-resonance and non-degeneracy conditions above fail in this limit.  For almost every possible nonzero $a_*$, both conditions are satisfied and imply that discrete breathers with almost any $\omega_b$ in the ranges above exist on the globally coupled lattice in \eqref{eq1}, at least up to some coupling strength $\lambda_b>0$ (see \cite{Hof:23} for details).

We note that the continued breathers are \textit{exponentially localized} on the lattice. This is true for the nearest-neighbor couplings considered here, but also more generally for an infinite number of couplings so long as at each site there exists a positive number $R$ such that the coupling strengths decay exponentially outside a circle of radius $R$ centered on that site.  A formal proof of this claim is given in \cite{Hof:23}. 

\subsection{Numerical continuation of discrete breathers}
Discrete breathers to \eqref{disc} ($\lambda\neq 0$) can be found using a numerical continuation scheme, seeded from the explicit anti-continuum states in \eqref{cn}. Since breathers are $T_b$-periodic, we represent system \eqref{disc} in the Fourier domain using the discrete Fourier transform, $\mathcal{F},$ in time and its inverse, $\mathcal{F}^{-1},$ given by
\[\hat{x}[p]:=\mathcal{F}\{x\}[p]=\sum_{j=0}^{N-1}x[j]e^{-2\pi i j p/N},\quad x[j]=\mathcal{F}^{-1}\{\hat{x}\}[j]=\dfrac{1}{N}\sum_{p=0}^{N-1}\hat{x}[p]e^{2\pi i j p/N},\]
where $N$ is the number of equally-spaced sampled points of $x(t)$ in the interval $[0, T_b].$  Furthermore, since we seek even-in-time solutions, only cosine terms are needed and the system size is reduced by half. 

The left-hand side of equation \eqref{disc} then becomes
\begin{equation}
\mathcal{F}\left\{F(X^{\lambda};\lambda)\right\}=\begin{pmatrix} 
-p^2\omega_b^2\hat{x}_{n,m}^A+\mathcal{F}\left\{\partial_{z_1}V(x_{n,m}^A,x_{n,m}^B)\right\}-\lambda\left[\hat{x}_{n,m-1}^B+\hat{x}_{n+1,m-1}^B\right]\\ 
-p^2\omega_b^2\hat{x}_{n,m}^B+\mathcal{F}\left\{\partial_{z_2}V(x_{n,m}^A,x_{n,m}^B)\right\}-\lambda\left[\hat{x}_{n,m+1}^A+\hat{x}_{n-1,m+1}^A\right]\end{pmatrix}
\end{equation}
for $p=0,1,\cdots,N-1.$

In the Fourier domain and for $\lambda=0$, the linearized mapping of \eqref{disc} about $X_*$ is represented by the block diagonal matrix 
\begin{equation}\label{matrix}
\mathcal{J}(X_*;0):=\begin{bmatrix}\ddots & & & \\
 & D_{\omega_b} & \text{-}I & \\
 & \text{-}I & D_{\omega_b} & \\
& & &\ddots \end{bmatrix}+
\begin{bmatrix}\ddots & & & & & \\
& 0 & & & &\\
&  & J_{11}^* & J_{12}^*  & &\\
 & &  J_{21}^* & J_{22}^*  & &\\
 & & & & 0&\\
& & & & &\ddots\end{bmatrix}.
\end{equation}
Above, $I$ is the $N\times N$ identity matrix, $D_{\omega_b}$ is the $N\times N$ diagonal matrix, \[D_{\omega_b}:=\text{diag}\left(\omega_0^2-\left(0\omega_b\right)^2,\omega_0^2-\left(1\omega_b\right)^2,\cdots,\omega_0^2-\left((N-1)\omega_b\right)^2\right),\] 
and $J_{\alpha\beta}^*$ ($\alpha,\beta=1,2$) are $N\times N$ \textit{circulent matrices} generated from the vectors
\[\mathbf{v}_{\alpha\beta}^*=N^{-1}\mathcal{F}\left\{\partial_{z_{\alpha}}\partial_{z_{\beta}}V(x_*^A,x_*^B)-\partial_{z_{\alpha}}\partial_{z_{\beta}}V(0,0)\right\}[p],\quad p=0,1,\cdots, N-1.\]
The circulent matrices arise due to convolution in the Fourier domain. Note that when the non-resonance condition \eqref{nonres} is violated, an infinite number of rows in \eqref{matrix} are scalar multiples of one another, and $\mathcal{J}$ is singular.

For a \textit{finite} honeycomb lattice consisting of $M$ unit-cells, the matrix $\mathcal{J}$ is of size $2MN\times2MN.$ See Figure \ref{fig3}(a) for a schematic of the global hexagonal lattice used.  When $\lambda\neq 0,$ $\mathcal{J}(X_*;0)$ in \eqref{matrix} is replaced by the symmetric block matrix $\mathcal{J}(X_*,\lambda)$ and, for the nearest-neighbor couplings considered here, each row in the first matrix on the right-hand side of \eqref{matrix} has \textit{two} additional nonzero blocks, $\lambda I,$ with the exception of those rows corresponding to cells on the boundary of the lattice where we enforce the Dirichlet condition. We now implement a Newton-Raphson scheme in the temporal Fourier domain to solve \eqref{disc} for some $\lambda\neq 0$ with the initial guess $X_*,$ where at each Newton iteration we use the generalized minimal residual method (GMRES) to solve the resulting system within a set tolerance. We note that since the seeded solutions in \eqref{cn} have exponentially decaying Fourier coefficients, only a small number of harmonics are needed to achieve convergence (here we choose either $N=8$, $16$, or $32$).   
\begin{figure}[h!]
  \includegraphics[scale=0.59]{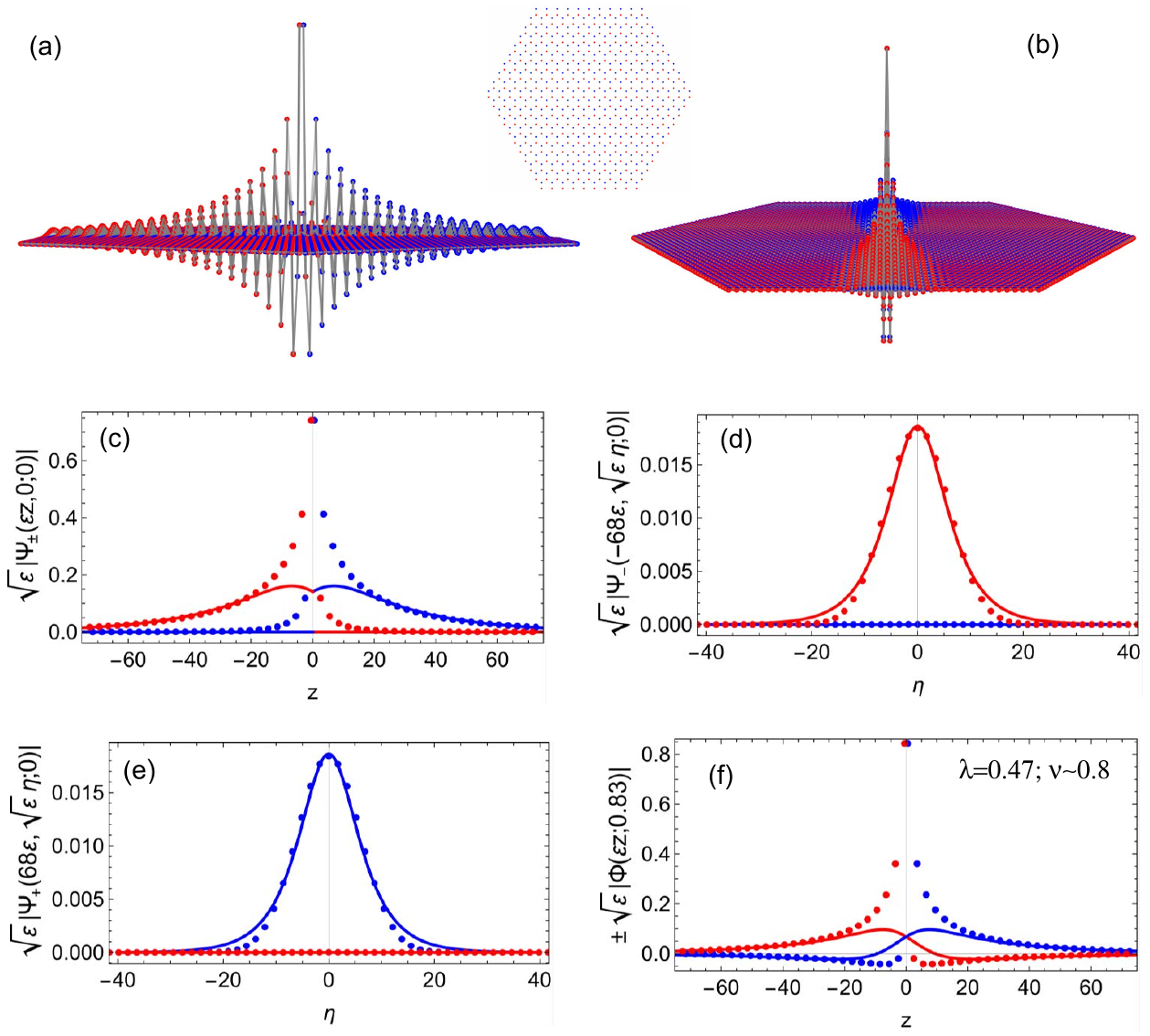}
  \caption{Inset to (a) shows the truncated honeycomb lattice with hexagonal Dirichlet boundaries used in computations; (a) view of the converged spatial profile at $t=0$ of the in-phase midgap ($\omega_b=\omega_0=2$) discrete breather when $\lambda=0.48\lesssim\lambda_*=1/2$, seeded from the $(+)$-state in \eqref{cn} ($g=1$) with $a_*\approx 0.82025$, $N=8$, and $M=7500$; (b) the same breather shown in (a), viewed at a perpendicular angle to the one in (a); (c) ($\pm$)-asymptotic solutions to the continuum theory in \eqref{scaled} along the $z$-axis (solid curves) for the same value of $\lambda$ as in (a) and (b) superimposed with the peak 1D slice of breather in (a) along the same direction; (d) profile of the (-)-asymptotic solution in \eqref{scaled} along the $\eta$-axis at fixed $z\approx -68$ superimposed with the corresponding 1D slice of the breather in (b); (e) same as (d) but for (+)-asymptotic solution in \eqref{scaled} and $z\approx +68$; (f) profile of a 1D gap soliton in Theorem \ref{gapsol} along $z$ at a frequency off-center inside the gap with nonzero $\nu\approx0.83$ and $\lambda=0.47$  superimposed with the numerically computed 2D breather with the same $\lambda$ and corresponding frequency ($\omega_b=\omega_0+\epsilon\nu/(2\omega_0)\approx 2.0125$) seeded from the $(+)$-state in \eqref{cn} with $a_*\approx 0.841261$.}
  \label{fig3}
\end{figure}

Figure \ref{fig3} shows the spatial profiles when $t=0$ of two discrete breathers, one with a frequency located in the center of a small phonon bandgap ($\omega_b=\omega_0$) and the other just slightly above the center of the gap ($\omega_b\gtrsim \omega_0$).  The breathers are seeded in the central unit cell of the global lattice from the $\pm$-states in \eqref{cn} and numerically continued to a value just below $\lambda_*=1/2.$  As $\lambda$ approaches $\lambda_*$, the localization strength of the breather dramatically weakens and its central amplitude starts to decay.  This behavior is observed in all of the panels in Figure \ref{fig3}, where the spatial tails of the breathers are spread over a lattice with $7,500$ unit-cells when $\lambda=0.48.$ The extended tail profiles agree well with the separable stationary solutions to our leading-order PDE approximation for $Z$ sufficiently large in \eqref{sep}. We are unable to numerically resolve this breather past $\lambda=0.48$ or below $\epsilon=0.04$ as it continues to decay and delocalize.  

Figure \ref{fig3} first shows the 2D spatial profile of the in-phase midgap breather at two perpendicular viewing angles.  A schematic of the lattice geometry used in computations is shown in the inset of panel (a). Panel (a) shows the breather's profile along the direction corresponding to the $Z$-axis of the continuum theory from Section \ref{sec:pde} and panel (b) shows the same breather along the transverse direction corresponding to the $H$-axis. The \textit{out-of-phase} midgap breather can also be continued to $\lambda=0.48$ (in this case $g=-1$) and has an envelope profile nearly identical to the in-phase breather and is therefore not shown.   

To compare the midgap breather tails with the asymptotic solutions to the continuum theory in \eqref{sep}, we expand the complex modulus of the long-wave envelopes for $Z\to\pm\infty$ in terms of their original spatial scalings,
\begin{align}\label{scaled}
&\sqrt{\epsilon}|\mathbf{\Psi}_{-}(\epsilon z,\sqrt{\epsilon}\eta;0)|=\sqrt{\dfrac{1-2\lambda}{2}}e^{-(1-2\lambda)z}\text{sech}\left(2(1-2\lambda)z\right)\text{sech}\left(\sqrt{(1-2\lambda)}\eta\right)
\begin{pmatrix}
   1 \\
   0
\end{pmatrix}\\ \nonumber
&\sqrt{\epsilon}|\mathbf{\Psi}_{+}(\epsilon z,\sqrt{\epsilon}\eta;0)|=\sqrt{\dfrac{1-2\lambda}{2}}e^{(1-2\lambda)z}\text{sech}\left(2(1-2\lambda)z\right)\text{sech}\left(\sqrt{(1-2\lambda)}\eta\right)
\begin{pmatrix}
   0 \\
   1
\end{pmatrix}.
\end{align}
Panel (c) in Figure \ref{fig3} shows the scaled $\pm$-profiles in \eqref{scaled} with solid curves along the $z$-axis for $z\geq 0$ ($z\leq 0$), respectively, at $\lambda=0.48$ superimposed with the 1D slice of the central peak of the breather shown in panel (b) along the same direction.  Panel (c) clearly shows that the spatial tails (for sufficiently large $|z|$) of this breather are accurately captured by the exact envelope expressions in \eqref{scaled}.  The envelope approximation for the tails of the out-of-phase breather is identical (not shown). 

In panels (d) and (e) of Figure \ref{fig3}, we plot the envelope profile $\sqrt{\epsilon}|\mathbf{\Psi}_{\mp}|$ in \eqref{scaled} along the $\eta$-axis for a fixed large value of $|z|$ at $\mp|z|$, respectively, both superimposed with the corresponding 1D slice of the breather in (b).  Again, the breather tails are well-approximated by the leading-order asymptotic profiles, with the exception of the very low amplitude tails in the $
\eta$-direction.  A possible reason for this discrepancy is the fact that the asymptotic theory in $\epsilon$ is not uniform in $z$ and $\eta$. Indeed, here $\epsilon=0.04$ and so $\sqrt{\epsilon}=0.2$, an order of magnitude larger. However, since convergence dramatically slows down past $\lambda=0.48$, we are unable to numerically resolve the breather for smaller $\epsilon$ where the agreement should improve. Additionally, the size of the computational domain, $M$, needed to  accurately capture the breather tails for $Z=\epsilon z\gg1$ becomes restrictively large as $\epsilon\downarrow 0$.  

Our numerics suggest that long-lived in- and out-of-phase breathers exist near the semi-Dirac setting, i.e., with $\lambda\lesssim\lambda_*=1/2,$ for other frequencies inside the gap. Indeed, when the frequency of the breather shown in Figure \ref{fig3} (a)-(e) is varied above and below the middle of the gap, but still within the gap, our numerical scheme may still converge. We remark that the tails of the converged breather again closely follow the envelope of the 1D gap soliton of the continuum theory in Theorem \ref{gapsol} along $z$ for the corresponding value of nonzero $-1<\nu<1$. However, as mentioned earlier, the presented asymptotic separation method of the semi-Dirac equation in Section \ref{sec:pde} does not produce a nontrivial result for nonzero $\nu$. 

In panel (f) of Figure \ref{fig3} we show an in-phase breather with an \textit{off-center} frequency located in the upper-half of the small gap, corresponding to a value of $\nu\approx0.83$.  Indeed, at leading-order in $\epsilon$ the gap is 
\[\omega_+(\mathbf{M};\epsilon)-\omega_-(\mathbf{M};\epsilon)=\sqrt{\omega_0^2+\epsilon}-\sqrt{\omega_0^2-\epsilon}\approx\dfrac{\epsilon}{\omega_0}\]
and $\nu\epsilon/(2\omega_0)$ represents the off-set from the center of the small gap for nonzero $-1<\nu<1.$
The breather is superimposed with the corresponding modulus of the gap line soliton, $\sqrt{\epsilon}|\mathbf{\Phi}(\epsilon z;0.83)|$, and again closely follows the breather tails for sufficiently large $|z|.$  

Finally, we remark that the amplitudes of in-phase (out-of-phase) breathers with hardening (softening) nonlinearity do not appear to decay to zero near the edge of the optical (acoustic) band, consistent with Theorem \ref{gapsol}.  Furthermore, in both cases, the exponential decay rate of the breather's tails in the $z$ direction decreases as its frequency nears these edges, approaching an algebraic decay like the gap solitons in Theorem \ref{gapsol}.

\subsection{Dynamical stability of discrete breathers}
The spectral stability of discrete breathers is determined by the linearized system about $X^{\lambda}(t).$ This entails solving a system of $4M$ coupled $T_b$-periodic linear differential equations. 
\begin{figure}[h!]
  \centering
  \includegraphics[scale=0.65]{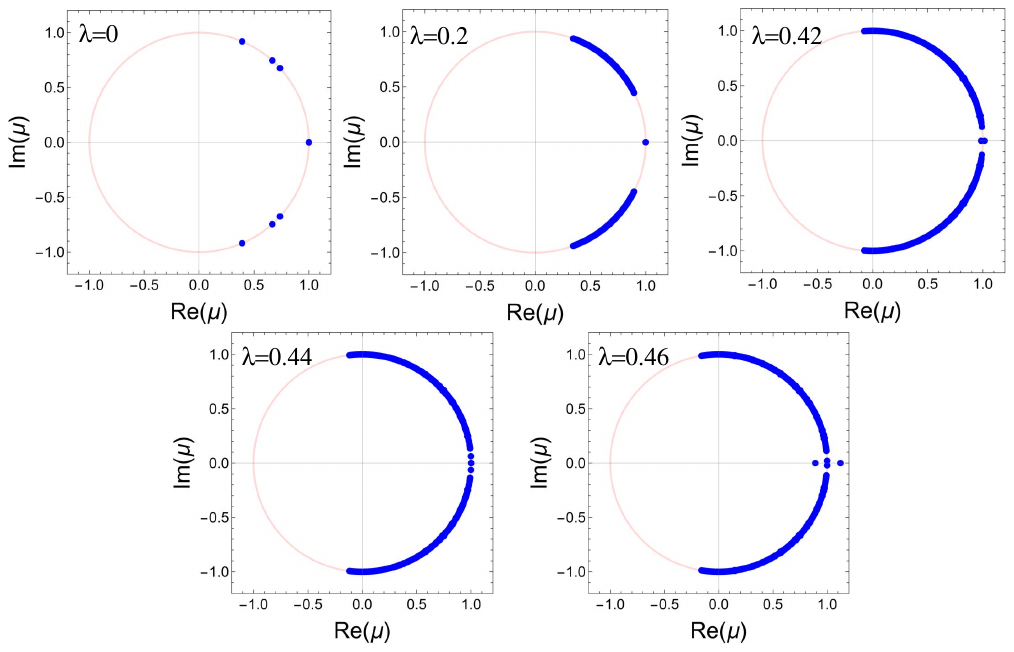}
  \caption{Floquet multipliers, $\mu,$ at different values of $\lambda$ on the same continuation curve as the midgap breather shown in Figure \ref{fig3}. }
  \label{fig4}
\end{figure}
\noindent  From Floquet theory~\cite{hill}, we construct the $4M\times 4M$ \textit{monodromy matrix}, where the $j$th column is the solution vector evaluated at $T_b$ of the system of $4M$ first-order linear equations with initial conditions $(0,\cdots, 1,0,\cdots,0),$ where the $1$ is in the $j$th position. The eigenvalues, $\mu,$ of the monodromy matrix are known as Floquet multipliers.
Since \eqref{disc} is Hamiltonian, it is well-known that the breather is linearly spectrally stable if and only if all of the Floquet multipliers lie along the unit circle in the complex plane; otherwise perturbations to the breather will grow exponentially with time.  

Figure \ref{fig4} shows the Floquet multipliers of the midgap breather seeded from the $(+)$-state in \eqref{cn} as shown in Figure \ref{fig3} at different values of $0\leq\lambda<\lambda_*.$  We remark that the Floquet multipliers of the midgap breather seeded from the $(-)$-state in \eqref{cn} are very similar and so are not shown here.  There are two Floquet multipliers for every $\lambda$ fixed at $+1$ (see \cite{Hof:23} or \cite{ma96} for a discussion). We note that the precise numerical resolution of these multipliers is difficult \cite{flasch08} and in Figure \ref{fig4} they sometimes do not appear to be located exactly at $+1$ although in fact they always are. 

Figure \ref{fig4} shows that when $\lambda=0$, in addition to the two Floquet multipliers at $+1$, there are three other distinct multipliers (along with their complex conjugates) on the unit circle. Two of these Floquet multiplier conjugate pairs have infinite multiplicity on the infinite lattice.  As $\lambda$ increases from zero, these pairs spread out into bands along the unit circle as shown in Figure \ref{fig4}. At approximately $\lambda=0.42,$ a lone pair of conjugate Floquet multipliers emerges out of the bands into a small gap centered at $+1.$ These multipliers then collide at $+1$ when $\lambda\approx 0.45$ and move off the unit circle onto the real-axis, indicating a transition to dynamic instability. Numerically, this transition value appears to be independent of the lattice size $M$ and the temporal resolution $N$. In addition, any small departures from the unit circle along the spectral bands in Figure \ref{fig4} are found to decay with lattice size $M$. Such finite-size instabilities have been shown to disappear on the infinite lattice, as discussed in \cite{marin}. Thus, Figure \ref{fig4} indicates a large parameter window in $\lambda$ where midgap in- and out-of-phase breathers are linearly spectrally stable.

\section{Nonlinear plane waves and their stability}\label{sec:plane}

To further elucidate breather properties in the bandgap, we derive exact \textit{de-localized} nonlinear plane wave solutions to \eqref{eq1} and analyze their spectral stability.  Specifically, we are interested in nonlinear plane waves bifurcating from the trivial solution at the band-edges of the phonon spectrum and into the gap as shown in Figure \ref{fig2} (here we do not assume that the gap is necessarily small as in section \ref{sec:pde}).  We first analyze the stability of these states by using perturbation theory in the weakly nonlinear regime and then numerically compute their Floquet spectra in both the perturbative and non-perturbative regimes.  

We find that for a hardening (softening) nonlinearity in the infinite lattice, the in-phase (out-of-phase) nonlinear plane wave bifurcating from the top (bottom) edge of the acoustic (optical) band into the gap becomes dynamically \textit{unstable} to perturbative waves with wavevectors concentrated around $\mathbf{M}$.  It has been hypothesized in \cite{kastner,flach2} that the instability of band-edge plane waves in this regime gives rise to discrete breathers with arbitrarily small nonzero amplitudes, provided they exist. The energy thresholds for the formation of discrete breathers depend on both the dimension of the lattice and the highest exponent in the power-law potential function~\cite{We:99}.  Numerical studies suggest that both the localization strength and amplitude of these breathers approach zero as their frequency approaches the edge of the band \cite{Hof:23}.  We expect that the continuum theory accurately describes breather dynamics in this regime (see section \ref{sec:pde}). 

However, not all breathers behave this way near the phonon spectrum. A generic subclass of breathers is known to maintain nonzero amplitudes even when their frequency approaches a band-edge.  It has been hypothesized that this phenomenon is directly related to the lack of a bifurcating band-edge plane wave or, if one exists, its dynamic stability~\cite{kastner}. For the lattice model considered here, this argument suggests that small-amplitude discrete breathers do not exist near one edge of the gap. Indeed, this is consistent with the asymmetric limiting behaviors of gap solitons stated in Theorem \ref{gapsol} near the two band-edges. 

We remark that an analogous approach to our analysis of nonlinear plane waves in the lattice was given in the context of the continuous NLS equation to study nonlinear Bloch waves and their connection to gap solitons in periodic systems in \cite{zhang}.  A rigorous analysis of the bifurcation curves of nonlinear Bloch waves in the Gross-Pitaevskii equation with a periodic potential was given in \cite{dohnal2}. The authors of \cite{zhang, dohnal2} also study the continuation of breathers from gaps into spectral bands and give a description of these delocalized hybrid states or ``out-of-gap" solitons and their oscillatory tails.  

\subsection{Nonlinear plane waves}
In this section, we first consider a lattice with a finite and even number of unit cells, $n,m=0,1,\cdots S-1$, like the one shown in Figure \ref{fig3}(a) and with periodic boundary conditions.  We then consider the lattice in the thermodynamic limit as $S\to\infty.$

We define the discrete Fourier transform on the lattice
\begin{equation}
\begin{pmatrix}
    \hat{x}_{\mathbf{k}}^A \\
    \hat{x}_{\mathbf{k}}^B
\end{pmatrix}=\dfrac{1}{S^2}\sum_{n,m=0}^{S-1}\begin{pmatrix}
    x_{n,m}^A \\
    x_{n,m}^B
\end{pmatrix}e^{i\left(n\mathbf{a}_1+m\mathbf{a}_2\right)\cdot \mathbf{k}}
\label{forward}
\end{equation}
for discrete $\mathbf{k}=S^{-1}\left(l\mathbf{b_1}+p\mathbf{b}_2\right)$ modulo $\mathcal{B},$ where $\mathbf{b}_1, \mathbf{b}_2$ are the reciprocal vectors of the honeycomb lattice and $l,p$ are integers in $\{-S/2+1,\cdots,S/2\}.$ We denote this equally spaced partition of the Brillouin zone by the $S^2$ wavevectors as $\mathcal{B}_{S}.$  

The inverse transform is then given by
\begin{equation}
\begin{pmatrix}
    x_{n,m}^A \\
    x_{n,m}^B
\end{pmatrix}=\sum_{\mathbf{k}\in\mathcal{B}_{S}}\begin{pmatrix}
    \hat{x}_{\mathbf{k}}^A \\
    \hat{x}_{\mathbf{k}}^B
\end{pmatrix}e^{-i\left(n\mathbf{a}_1+m\mathbf{a}_2\right)\cdot \mathbf{k}}.
\label{backward}
\end{equation}

Substituting \eqref{backward} into equation \eqref{eq1} and transforming the result using \eqref{forward} gives the equivalent equation in terms of the Fourier coefficients 
\begin{align}
\label{fourier}
&\ddot{\hat{x}}_{\mathbf{k}}^A=\lambda e^{i\mathbf{a}_2\cdot {\mathbf{k}}}\left(1+e^{-i\mathbf{a}_1\cdot {\mathbf{k}}}\right)\hat{x}_{\mathbf{k}}^B\\ \nonumber
&\text{-}\dfrac{1}{S^2}\sum_{n,m=0}^{S-1}\partial_{z_1}V\left(\sum_{{\mathbf{k}}'\in\mathcal{B}_{S}}\hat{x}^A_{{\mathbf{k}}'}e^{-i(\mathbf{a}_1n+\mathbf{a}_2m)\cdot {\mathbf{k}}'},\sum_{{\mathbf{k}}''\in\mathcal{B}_{S}}\hat{x}^B_{{\mathbf{k}}''}e^{-i(\mathbf{a}_1n+\mathbf{a}_2m)\cdot {\mathbf{k}}''}\right)e^{i(\mathbf{a}_1n+\mathbf{a}_2m)\cdot {\mathbf{k}}}\\ \nonumber
&\ddot{\hat{x}}_{\mathbf{k}}^B=\lambda e^{-i\mathbf{a}_2\cdot {\mathbf{k}}}\left(1+e^{i\mathbf{a}_1\cdot {\mathbf{k}}}\right)\hat{x}_{\mathbf{k}}^A\\ \nonumber
&\text{-}\dfrac{1}{S^2}\sum_{n,m=0}^{S-1}\partial_{z_2}V\left(\sum_{{\mathbf{k}}'\in\mathcal{B}_{S}}\hat{x}^A_{{\mathbf{k}}'}e^{-i(\mathbf{a}_1n+\mathbf{a}_2m)\cdot {\mathbf{k}}'},\sum_{{\mathbf{k}}''\in\mathcal{B}_{S}}\hat{x}^B_{{\mathbf{k}}''}e^{-i(\mathbf{a}_1n+\mathbf{a}_2m)\cdot {\mathbf{k}}''}\right)e^{i(\mathbf{a}_1n+\mathbf{a}_2m)\cdot{\mathbf{k}}}.
\end{align}

This equation is greatly simplified at the edges of the phonon spectrum 
\eqref{eq2} at $\mathbf{k}=\mathbf{\Gamma}$ and $\mathbf{k}=\mathbf{M}$. Here, we are interested in the latter case and take $\hat{x}_{\mathbf{k}}^{A,B}=\hat{x}_{M}^{A,B}\delta_{\mathbf{M}},$ where $\delta_{\mathbf{M}}$ is the delta function centered on the wavevector $\mathbf{M}.$  

For the potential function in \eqref{potential}, equation \eqref{fourier} simplifies to
\begin{align}
\label{m}
&\ddot{\hat{x}}_{M}^A=-\omega_0^2\hat{x}_{M}^A+\left(1-2\lambda\right)\hat{x}_{M}^B-g\left[\left(\hat{x}_M^A\right)^2+\left(\hat{x}_M^B\right)^2\right]\hat{x}_M^A\\ \nonumber
&\ddot{\hat{x}}_{M}^B=-\omega_0^2\hat{x}_{M}^B+\left(1-2\lambda\right)\hat{x}_{M}^A-g\left[\left(\hat{x}_M^A\right)^2+\left(\hat{x}_M^B\right)^2\right]\hat{x}_M^B.
\end{align}
We remark that the linear part of system \eqref{m} decouples at the semi-Dirac point when $\lambda=\lambda_*$ and has a degenerate eigenvalue of $\omega_0^2.$  Like the anti-continuum problem in section \ref{sec:anti}, for $g=\pm 1$, there are in-phase and out-of-phase even-in-time periodic solutions for initial data $\hat{x}_{M}^A(0)=\pm\hat{x}_M^B(0)=a$ and $\dot{\hat{x}}_{M}^A(0)=\dot{\hat{x}}_{M}^B(0)=0$, respectively.   

For $\lambda\in[0,\lambda_*),$ the phonon spectrum has a bandgap of width $2(1-2\lambda)$ (see Figure \ref{fig2}).  When $g=1,$ the in-phase nonlinear plane-wave bifurcates from the zero solution at the top edge of the acoustic band where $\omega_-^2(\lambda):=\omega_-^2(\mathbf{M};\lambda)=\omega_0^2-1+2\lambda$ into the gap.  Similarly, when $g=-1,$ the out-of-phase nonlinear plane-wave bifurcates from the zero solution at the bottom edge of the optical band where $\omega_+^2(\lambda):=\omega_+^2(\mathbf{M};\lambda)=\omega_0^2+1-2\lambda$ into the gap.

These nonlinear plane-waves are given explicitly by
\begin{equation}\label{plane1}
    \begin{pmatrix}
    x_{n,m}^{A}\\
    x_{n,m}^B
\end{pmatrix}^{\pm}=\begin{pmatrix}
    \hat{x}_M^A \\
    \hat{x}_M^B
\end{pmatrix}^{\pm}e^{-i\pi m}=P_{\pm}(t;\lambda)\begin{pmatrix}1\\
\pm 1
\end{pmatrix}e^{-i\pi m},
\end{equation}
where $P_{\pm}(t;\lambda)$ is the cnoidal function in \eqref{cn} with $\Omega_{\pm}$ replaced by $\omega_{\pm}$. We note that when $\lambda=0$, $\omega_{\pm}=\Omega_{\pm}$ and $P_{\pm}$ is identical to the anti-continuum limit solution in section \ref{sec:anti}, but the solution in \eqref{plane1} is nonzero over the entire lattice.

To investigate the dynamic stability of the nonlinear plane waves in \eqref{plane1}, we linearize system \eqref{fourier} about their orbit.  Due to the fact that the potential \eqref{potential} is even symmetric, we obtain
\begin{equation}\label{lin}
\begin{pmatrix}
    \ddot{y}_{\mathbf{k}}^A \\
    \ddot{y}_{\mathbf{k}}^B
\end{pmatrix}=
\left[-\mathcal{H}_{V}[\hat{X}^{\pm}(t)]+
\begin{pmatrix}
    0 & \lambda e^{i\mathbf{k}\cdot \mathbf{a}_2}(1+e^{-i\mathbf{k}\cdot \mathbf{a}_1})  \\
     \lambda e^{-i\mathbf{k}\cdot \mathbf{a}_2}(1+e^{i\mathbf{k}\cdot \mathbf{a}_1}) & 0
\end{pmatrix}\right]
\begin{pmatrix}
    y_{\mathbf{k}}^A \\
    y_{\mathbf{k}}^B
\end{pmatrix},
\end{equation}
where $\mathcal{H}_{V}[\hat{X}^{\pm}(t)]$ is the Hessian of $V$ evaluated at $\hat{X}^{\pm}:=\left(\hat{x}_M^{A},\hat{x}_M^B\right)^{\pm\top}$ in \eqref{plane1}. Equation \eqref{lin} is a linear block-diagonal system of $4\Lambda^2$ second-order differential equations with periodic coefficients~\cite{hill}.

\subsection{Perturbative stability analysis of band edge plane-waves}

To study the stability of the nonlinear plane-waves near the edges of the bandgap, we first derive the leading-order expression to approximate $P_{\pm}(t;\lambda)$ for small amplitude waves, $|a|\ll 1$, by using the Poincar\'{e}-Lindstedt method~\cite{verhulst}. The result is
\begin{equation}\label{per_1}
    P_{\pm}(t;\lambda)=a\cos\left[\left(\omega_{\mp}\pm\dfrac{3a^2}{4\omega_{\mp}}+O\left(a^4\right)\right)t\right]+O\left(a^3\right).
\end{equation}

In Floquet theory~\cite{hill}, it is well known that a transition from linear stability to instability of the periodic solution in \eqref{plane1} occurs when it is perturbed by a periodic orbit of the same period.  At this point a \textit{tangent bifurcation} occurs when two Floquet multipliers collide at $+1$ and leave the unit circle (i.e., see Figure \ref{fig4}). Following the arguments in \cite{flach2, kastner}, we locate this transition to leading order in $a$ by using the Poincar\'{e}-Lindstedt method again on the linearized system \eqref{lin} about \eqref{per_1}. After obtaining its frequency expansion, $\omega(a)$, we equate it with the frequency expansion in \eqref{per_1} and solve for the critical value, $a_c.$  We are interested in values of $a_c$ in the thermodynamic limit.

Introducing the perturbation expansions in powers of $a^2$:
\[y_{\mathbf{k}}^{A,B}=y_{\mathbf{k},(0)}^{A,B}+y_{\mathbf{k},(1)}^{A,B}a^2+\cdots\quad\text{ and }\quad\omega^2(a)=\omega_{(0)}^2+\omega_{(1)}^2a^2+\cdots, \]
we define the scaled time variable $\tau=\omega(a)t.$  Plugging the expansions into \eqref{lin} gives at zeroth-order:
\[L_0\mathbf{Y_{\mathbf{k},(0)}}:=\begin{pmatrix}
    \omega_{(0)}^2\dfrac{d^2}{d\tau^2}+\omega_0^2 & -1-\lambda e^{i\mathbf{k}\cdot \mathbf{a}_2}(1+e^{-i\mathbf{k}\cdot \mathbf{a}_1})  \\
     -1-\lambda e^{-i\mathbf{k}\cdot \mathbf{a}_2}(1+e^{i\mathbf{k}\cdot \mathbf{a}_1}) & \omega_{(0)}^2\dfrac{d^2}{d\tau^2}+\omega_0^2
\end{pmatrix}\begin{pmatrix}
    y_{\mathbf{k},(0)}^A \\
    y_{\mathbf{k},(0)}^B
\end{pmatrix}=0.\]

The general solution to the zeroth-order equation can be expressed as
\begin{align*}
    \begin{pmatrix}
    y_{\mathbf{k},(0)}^A \\
    y_{\mathbf{k},(0)}^B
\end{pmatrix}=&\left[\alpha\exp\left(\dfrac{i\omega_{-}(\mathbf{k};\lambda)\tau}{\omega_{(0)}}\right)+\beta\exp\left(-\dfrac{i\omega_{-}(\mathbf{k};\lambda)\tau}{\omega_{(0)}}\right)\right]\begin{pmatrix}v_1(\mathbf{k};\lambda)  \\ v_2(\mathbf{k};\lambda) \end{pmatrix}\\ \nonumber
&+\left[\gamma\exp\left(\dfrac{i\omega_{+}(\mathbf{k};\lambda)\tau}{\omega_{(0)}}\right)+\zeta\exp\left(-\dfrac{i\omega_{+}(\mathbf{k};\lambda)\tau}{\omega_{(0)}}\right)\right]\begin{pmatrix}-v_1(\mathbf{k};\lambda) \\ v_2(\mathbf{k};\lambda) \end{pmatrix},
\end{align*}
where $\omega^2_{\mp}(\mathbf{k};\lambda)$ are the acoustic/optical bands in \eqref{eq2}, $\alpha,\beta,\gamma,\zeta$ are complex constants, and $\left(\pm v_1(\mathbf{k};\lambda),v_2(\mathbf{k};\lambda)\right)^{\top}$ are the corresponding eigenvectors.  The components $v_1$ and $v_2$ are nonvanishing for all $\mathbf{k}\in\mathcal{B}_{S}$ and $v_1=v_2=1$ at both $\mathbf{k}=\mathbf{M}$ and $\mathbf{k}=\mathbf{\Gamma}$ for all $0\leq\lambda<\lambda_*$.  

At first-order, we obtain
\[L_0\begin{pmatrix}
    y_{\mathbf{k},(1)}^A \\
    y_{\mathbf{k},(1)}^B
\end{pmatrix}=\begin{pmatrix}
    \omega_{(1)}^2\dfrac{d^2}{d\tau^2}+4g\cos^2(\tau) & 2\cos^2(\tau)  \\
     2\cos^2(\tau) & \omega_{(1)}^2\dfrac{d^2}{d\tau^2}+4g\cos^2(\tau)
\end{pmatrix}\begin{pmatrix}
    y_{\mathbf{k},(0)}^A \\
    y_{\mathbf{k},(0)}^B
\end{pmatrix}.
\]
The secular terms in the first-order equation can be removed only when the magnitudes of the components $v_1$ and $v_2$ are equal, which occurs only at the edges of the bands.  This is true since the one free parameter, $\omega_{(1)}^2$, must be chosen to eliminate the terms with $\cos(\tau)$ in \textit{both} equations above. This results in the condition $v_1^2=v_2^2$, with the relative sign chosen to match the corresponding eigenvector.  

Choosing $\omega_{(0)}^2=\omega_{\mp}^2(\mathbf{k};\lambda)$ and $\omega_{(1)}^2=\pm9/2$ for $g=\pm1$,
respectively, removes the secular terms and gives a leading-order solution that is bounded for all time.  The Floquet multiplier corresponding to these periodic orbits remains fixed at $+1$ and does not lead to a tangent bifurcation~\cite{flach2}.  However, in the thermodynamic limit, equating the leading order frequency expansion, $\omega^2(a)$, with the nonlinear plane wave in \eqref{per_1}, gives 
\begin{align}\label{ac}
   &a_{c,M}^2:=\dfrac{1}{3}\left(\omega^2_{\mp}(\lambda)-\omega_{\mp}^2(\mathbf{k};\lambda)\right)\to0\quad\text{ as }\mathbf{k}\to\mathbf{M},\\
   \nonumber
   &a_{c,\Gamma}^2:=\dfrac{1}{3}\left(\omega^2_{\mp}(\lambda)-\omega_{\mp}^2(\mathbf{k};\lambda)\right)\to \dfrac{4\lambda}{3}\quad\text{ as }\mathbf{k}\to\mathbf{\Gamma}.
\end{align}
This is due to the fact that the phonon bands and the eigenvectors are continuous functions of the wavevector $\mathbf{k},$ which densely fills the Brillouin zone as $S\to\infty.$ 

The first limit in \eqref{ac} is valid for all $0\leq\lambda<\lambda_*$.  This suggests that \eqref{per_1} undergoes a first transition to instability near the wavevector $\mathbf{M}$ as it bifurcates from the zero solution and into the bandgap. The second limit in \eqref{ac} is valid as long as $\lambda$ is small enough to safely neglect higher-order terms in the perturbation expansion.   In this case, the nonlinear plane wave undergoes another tangent bifurcation near $\mathbf{\Gamma}$ at the nonzero amplitude $2\sqrt{\lambda/3}.$  We find excellent agreement between these perturbative results and the numerics presented below.  We remark that when $\lambda=0$ the unit cells of the lattice are uncoupled and the phonon bands are perfectly flat over the entire Brillouin zone.  In this case, \eqref{plane1} is dynamically stable and nearly all of the system's Floquet multipliers are repeated, located exactly at $+1$.

\subsection{Numerical stability of nonlinear plane waves}

We now numerically compute the Floquet spectra of the linearized system \eqref{lin} about the exact nonlinear plane wave in \eqref{plane1} without any perturbative assumption on its amplitude.  
 \begin{figure}[h!]
  \centering
  \includegraphics[scale=0.65]{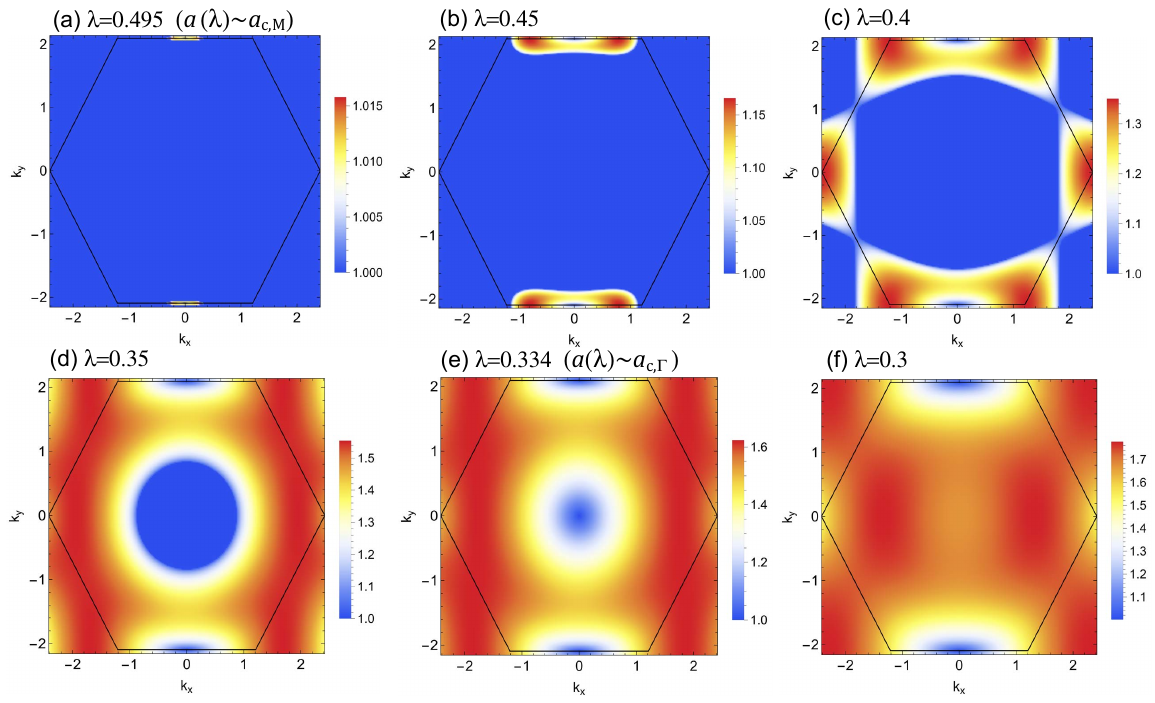}
  \vspace{-0.2cm}
  \caption{Magnitude of the Floquet spectra of the hardening nonlinear plane wave in \eqref{plane1} over the Brillouin zone, depicted by black lines, with $S=250$. Regions where the magnitude equals one are stable and those greater than one are unstable. (a) shows when the gap is small and the band-edge plane wave has undergone a tangent bifurcation near the wavevector $\mathbf{M}$. (b)-(f) show that as the gap widens there are growing regions of instability and a persistent neighborhood of stability around $\mathbf{M}.$ (e) shows when $a^2(\lambda)\approx a_{c,\Gamma}^2=4\lambda/3$, where \eqref{plane1} undergoes another tangent bifurcation near the $\Gamma$-point.  (f) shows just after the tangent bifurcation at $a\gtrsim a_{c,\Gamma}$.  }
  \label{fig5}
\end{figure}
\noindent Here, we take $g=1$ and consider the in-phase wave in \eqref{plane1} that emerges from the upper-edge plane-wave of the acoustic band at $\mathbf{k}=\mathbf{M}.$   We compute the Floquet spectrum at the point where the bifurcation curve has crossed the gap and, hence, the frequency of \eqref{plane1} intersects the lower-edge of the optical band.  Figure \ref{fig5} shows the magnitude of the largest Floquet multiplier for each $\mathbf{k}\in\mathcal{B}_{S}$ using the numerical solution to \eqref{lin}. Stable regions are colored blue.

Panel (a) in Figure \ref{fig5} shows that the bottom band-edge plane wave undergoes a tangent bifurcation near $\mathbf{M}$ when the gap is small ($\lambda=0.495\lesssim\lambda_*=1/2$) and so the amplitude of \eqref{plane1}, denoted by $a(\lambda)$, is slightly greater than $a_{c,M}=0$.  This is consistent with the result of our perturbation analysis in \eqref{ac}.  As the gap widens, $\lambda$ decreases and $a(\lambda)$ increases.  The subsequent panels in \ref{fig5} show a growing region of instability emanating from $\mathbf{M}$ with a persistent neighborhood of stability centered on $\mathbf{M}$.  This behavior is akin to the stability/instability patterns of Arnold tongues~\cite{arnold}.   Panels (e) and (f) show that a second tangent bifurcation occurs at $\mathbf{\Gamma}$ when $a(\lambda)\approx a_{c,\Gamma}$ predicted by our perturbation theory in \eqref{ac}.

Figure \ref{fig5} is consistent with the different behavior of discrete breathers near the lower and upper edges of a small gap, as described by the gap line-solitons in Theorem \ref{gapsol}.  For a fixed nonzero bandgap, weakly-localized (in $Z$), small-amplitude discrete breathers emerge near the bottom edge of the gap due to the instability of the lower band-edge plane wave~\cite{flach2,kastner}.  In contrast, no small-amplitude discrete breathers exist near the upper edge of the gap due to the lack of a band-edge plane wave that ``repels" from the linear spectrum for a hardening nonlinearity.  Indeed, there is a mismatch between the in-phase dynamics of the continued breather and the out-of-phase dynamics of the plane wave at the edge of the optical band.

\section{Concluding remarks}\label{sec:con}
We conducted a formal asymptotic, numerical, and stability analysis of discrete breathers and nonlinear plane waves on a honeycomb lattice near a unique point in the dispersion relation known as semi-Dirac crossing. Our analysis connects breather dynamics in the highly localized and strongly nonlinear regime with breather dynamics in the long-wave and weakly nonlinear regime. Such an approach gives a complete description of hybrid localized structures with spatially extended tails described by exact separable solutions to an asymptotic effective PDE theory. Even though an existence proof for global localized solutions to the 2D semi-Dirac equation is lacking, we do not require them here to accurately describe a large family of long-lived discrete breather states on the lattice.  Our work provides a framework for describing coherent structures in more complicated lattice models of current interest in condensed matter systems and metamaterial design, including 2D layered and twisted lattices~\cite{twist,roadmap}.

\bibliographystyle{siamplain}
\bibliography{references}

\begin{thebibliography}{10}

\bibitem{aceves}
{\sc A.~B. Aceves and S.~Wabnitz}, {\em Self induced transparency solitons in nonlinear refractive periodic media}, Phys. Lett. A, 141 (1989), p.~3742.

\bibitem{arnold}
{\sc V.~I. Arnol'd}, {\em Remarks on the perturbation theory for problems of {M}athieu type}, Russ. Math. Surv., 38 (1983).

\bibitem{bertoldi}
{\sc K.~Bertoldi, V.~Vitelli, J.~Christensen, and M.~V. Hecke}, {\em Flexible mechanical metamaterials}, Nature Reviews Materials, 2 (2017), p.~17066.

\bibitem{twist}
{\sc A.~Bistritzer and A.~H. MacDonald.}, {\em {M}oiré bands in twisted double-layer graphene}, Proceedings of the National Academy of Sciences, 108 (2011).

\bibitem{borelli}
{\sc W.~Borelli}, {\em Stationary solutions for the 2d critical {D}irac equation with {K}err nonlinearity}, Journal of Differential Equations, 263 (2017).

\bibitem{caz}
{\sc T.~Cazenave and L.~Vazquez}, {\em Existence of localized solutions for a classical nonlinear {D}irac field}, Communications in Mathematical Physics, 105 (1986).

\bibitem{haller}
{\sc R.~Cruickshank, F.~Lorenzi, A.~L. Rooij, E.~F. Kerr, T.~Hilker, S.~Kuhr, L.~Salasnich, and E.~Haller}, {\em Experimental observation of single- and multisite matter-wave solitons in an optical accordion lattice}, Physical Review Letters, 135 (2025), p.~263404.

\bibitem{dai}
{\sc Y.~Dai, H.~Yu, Z.~Zhu, Y.~Wang, and L.~Huang}, {\em Discrete breathers and energy localization in a nonlinear honeycomb lattice}, Physical Review E, 104 (2021), p.~064201.

\bibitem{roadmap}
{\sc F.~J.~G. de~Abajo, D.~N. Basov, F.~H.~L. Koppens, et~al.}, {\em Roadmap for photonics with 2d materials}, ACS Photonics, 12 (2025).

\bibitem{dietl}
{\sc P.~Dietl, F.~Pi\`{e}chon, and G.~Montambaux}, {\em New magnetic field dependence on {L}andau levels in a graphenelike structure}, Physical Review Letters, 100 (2008), p.~236405.

\bibitem{dohnal}
{\sc T.~Dohnal and A.~Aceves}, {\em Optical soliton bullets in (2+1)d nonlinear {B}ragg resonant periodic geometries}, Studies in Applied Mathematics, 115 (2005), pp.~209--232.

\bibitem{dohnal2}
{\sc T.~Dohnal and H.~Uecker}, {\em Bifurcation of nonlinear {B}loch waves from the spectrum in the {G}ross–{P}itaevskii equation}, Journal of Nonlinear Science, 26 (2016), pp.~581--618.

\bibitem{sere}
{\sc M.~J. Esteban and E.~S\'{e}r\'{e}}, {\em Stationary states of the nonlinear {D}irac equation: A variational approach}, Communications in Mathematical Physics, 171 (1995).

\bibitem{feff14}
{\sc C.~L. Fefferman, J.~P. Lee-Thorp, and M.~I. Weinstein}, {\em Topologically protected states in one-dimensional continuous systems and {D}irac points}, Proceedings of the National Academy of Sciences,  (2014).

\bibitem{fibich}
{\sc G.~Fibich}, {\em The Nonlinear Schrödinger Equation: Singular Solutions and Optical Collapse}, Springer, 2015.

\bibitem{flach2}
{\sc S.~Flach}, {\em Tangent bifurcation of band edge plane waves, dynamical symmetry breaking and vibrational localization}, Physica D: Nonlinear Phenomena, 91 (1996).

\bibitem{flasch08}
{\sc S.~Flach and A.~V. Gorbach}, {\em Discrete breathers - advances in theory and applications}, Physics Reports, 467 (2008).

\bibitem{sigal}
{\sc P.~D. Hislop and I.~M. Sigal}, {\em The Essential Spectrum: Weyl’s Criterion}, Springer, 1995.

\bibitem{Hof:23}
{\sc A.~Hofstrand, H.~Li, and M.~I. Weinstein}, {\em Discrete breathers of nonlinear dimer lattices: bridging the anti-continuous and continuous limits}, Journal of Nonlinear Science, 33 (2023).

\bibitem{IW10}
{\sc B.~Ilan and M.~Weinstein}, {\em Band-edge solitons, nonlinear {S}chr\"odinger/{G}ross–{P}itaevskii equations, and effective media}, Multiscale Modeling and Simulation, 8 (2010).

\bibitem{ess}
{\sc G.~Jotzu, M.~Messer, R.~Desbuquois, M.~Lebrat, T.~Uehlinger, D.~Greif, and T.~Esslinger}, {\em Experimental realization of the topological {H}aladane model with ultracold fermions}, Nature, 515 (2014), pp.~237--240.

\bibitem{kane}
{\sc C.~L. Kane and E.~J. Mele}, {\em Quantum spin {H}all effect in graphene}, Physical Review Letters, 95 (2005).

\bibitem{kastner}
{\sc M.~Kastner}, {\em Energy thresholds for discrete breathers}, Physical Review Letters, 92 (2004), p.~104301.

\bibitem{kouk2}
{\sc V.~Koukouloyannis, P.~G. Kevrekidis, K.~J.~H. Law, I.~Kourakis, and D.~J. Frantzeskakis}, {\em Existence and stability of multisite breathers in honeycomb and hexagonal lattices}, Journal of Physics A: Mathematical and Theoretical, 43 (2010).

\bibitem{kouk}
{\sc V.~Koukouloyannis and I.~Kourakis}, {\em Discrete breathers in hexagonal dusty plasma lattices}, Physical Review E, 80 (2009), p.~026402.

\bibitem{Li:24}
{\sc H.~Li, A.~Hofstrand, and M.~I. Weinstein}, {\em Stability of traveling waves in a nonlinear hyperbolic system approximating a dimer array of oscillators}, Journal of Nonlinear Science, 35 (2025).

\bibitem{ma}
{\sc T.~Ma, A.~B. Khanikaev, S.~H. Mousavi, and G.~G. Shvets}, {\em Electromagnetic waves around sharp corners: topologically protected photonic transport in metawaveguides.}, Physical Review Letters, 114 (2015).

\bibitem{hill}
{\sc W.~Magnus and S.~Winkler}, {\em Hill's Equation}, Dover, 1979.

\bibitem{sax}
{\sc J.-C. Maraver, N.~Boussaid, A.~Comech, P.~G. Kevrekidis, and A.~Saxena}, {\em Solitary waves in the nonlinear {D}irac equation}, Nonlinear Systems, 1 (2018), \url{https://doi.org/https://doi.org/10.1007/978-3-319-66766-9_4}.

\bibitem{ma96}
{\sc J.~L. Marin and S.~Aubry}, {\em Breathers in nonlinear lattices: numerical calculation from the anticontinous limit}, Nonlinearity, 9 (1996).

\bibitem{marin}
{\sc J.~L. Marin and S.~Aubry}, {\em Finite size effects on instabilities of discrete breathers}, Physica D, 119 (1998).

\bibitem{nirenberg}
{\sc L.~Nirenberg}, {\em Topics in Nonlinear Functional Analysis}, American Mathematical Society, 2001.

\bibitem{novo}
{\sc K.~S. Novoselov, A.~K. Geim, S.~V. Morosov, D.~Jiang, Y.~Zhang, S.~V. Dobonos, J.~V. Grigorieva, and A.~A. Firsov}, {\em Electric field effect in atomically thin carbon films}, Science, 306 (2004).

\bibitem{palermo}
{\sc F.~Palermo, L.~Q. English, J.~C. P.-Maraver, and P.~G. Kevrekidis}, {\em Nonlinear edge modes in a honeycomb electrical lattice near the {D}irac points}, Physics Letters A, 384 (2020).

\bibitem{pelinovsky1}
{\sc D.~Pelinovsky}, {\em Survey on global existence in the nonlinear {D}irac equations in one spatial dimension}, Res. Inst. Math. Sci. (RIMS),  (2011), \url{https://doi.org/https://arxiv.org/pdf/1011.5925}.

\bibitem{pelinovsky2}
{\sc D.~Pelinovsky and Y.~Shimabukuro}, {\em Transverse instability of line solitary waves in massive {D}irac equations}, Journal of Nonlinear Science, 26 (2016).

\bibitem{basov}
{\sc Y.~Shao, S.~Moon, A.~N. Rudenko, J.~Wang, J.~Herzog-Arbeitman, M.~Ozerov, D.~Graf, Z.~Sun, R.~Queiroz, S.~H. Lee, Y.~Zhu, Z.~Mao, M.~I. Katsnelson, B.~A. Bernevig, D.~Smirnov, A.~J. Millis, and D.~N. Basov}, {\em Semi-{D}irac fermions in a topological metal}, Physical Review X, 14 (2024), p.~041057.

\bibitem{hibbins}
{\sc T.~A. Starkey, V.~Kyrimi, G.~P. Ward, J.~R. Sambles, and A.~P. Hibbins}, {\em Experimental characterization of the bound acoustic surface modes supported by honeycomb and hexagonal hole arrays}, Scientific Reports, 9 (2019), p.~15773.

\bibitem{ssh79}
{\sc W.~P. Su, J.~R. Schrieffer, and A.~J. Heeger}, {\em Solitons in polyacetylene}, Physical Review Letters, 42 (1979), p.~1698, \url{https://doi.org/https://doi.org/10.1103/PhysRevLett.42.1698}.

\bibitem{thirring}
{\sc W.~E. Thirring}, {\em A soluble relativistic field theory}, Annals of Physics, 3 (1958), pp.~91--112.

\bibitem{verhulst}
{\sc F.~Verhulst}, {\em Nonlinear Differential Equations and Dynamical Systems}, Springer, 1996.

\bibitem{wattis}
{\sc J.~A.~D. Wattis and L.~M. James}, {\em Discrete breathers in honeycomb {F}ermi–{P}asta–{U}lam lattices}, J. Phys. A: Math. Theor., 47 (2014), p.~345101.

\bibitem{We:99}
{\sc M.~I. Weinstein}, {\em Excitation thresholds for nonlinear localized modes on lattices}, Nonlinearity, 12 (1999).

\bibitem{zhang}
{\sc Y.~Zhang and B.~Wu}, {\em Composition relation between gap solitons and {B}loch waves in nonlinear periodic systems}, Physical Review Letters, 102 (2009), p.~093905.

\end{thebibliography}
\end{document}